\documentclass[a4paper]{article}
\usepackage{authblk}
\usepackage{amsthm}
\usepackage[letterpaper, left=1in, right=1in, bottom=1in, top=1in]{geometry}

\usepackage{graphicx}
\usepackage{amsmath,amssymb}
\usepackage{booktabs}
\usepackage{multirow}

\newtheorem{theorem}{Theorem}
\newtheorem{corollary}{Corollary}
\newtheorem{lemma}{Lemma}
\newtheorem{observation}{Observation}
\newtheorem{problem}{Problem}

\newcommand{\loglog}{\log\log}

\newcommand{\STree}{\mathsf{STree}}
\newcommand{\str}{\mathit{str}}

\newcommand{\Rarm}{\mathsf{Rarm}}
\newcommand{\Larm}{\mathsf{Larm}}
\newcommand{\rarm}{\mathsf{rarm}}
\newcommand{\larm}{\mathsf{larm}}

\newcommand{\MUPS}{\mathsf{MUPS}}
\newcommand{\beg}{\mathit{beg}}
\newcommand{\inbeg}{\mathit{inbeg}}
\newcommand{\xbeg}{\mathit{xbeg}}
\newcommand{\sub}{\mathit{sub}}
\newcommand{\add}{\mathit{add}}
\newcommand{\rem}{\mathit{rem}}
\newcommand{\MA}{\mathsf{MA}}
\newcommand{\EARM}{\mathit{EA}}
\newcommand{\lcp}{\mathit{lcp}}
\newcommand{\lce}{\mathit{lce}}
\newcommand{\nca}{\mathsf{nca}}
\newcommand{\EERTREE}{\mathsf{EERTREE}}
\newcommand{\Todd}{\mathcal{T}_{\mathsf{odd}}}
\newcommand{\Teven}{\mathcal{T}_{\mathsf{even}}}
\newcommand{\pal}{\mathit{pal}}

\newcommand{\odd}{\mathsf{odd}}
\newcommand{\even}{\mathsf{even}}
\newcommand{\centerpal}{\mathit{center}}

\newcommand{\Ext}{\mathit{Ext}}

\graphicspath{{img/}}

\usepackage{url}
\usepackage{hyperref}
\usepackage{cleveref}
\bibliographystyle{plainurl}

\usepackage[final]{fixme}
\fxsetup{theme=color,mode=multiuser}
\FXRegisterAuthor{tm}{atm}{\color{red}TM}
\FXRegisterAuthor{mf}{amf}{MF}
\fxsetface{env}{}

\begin{document}
\title{
  \renewcommand*{\thefootnote}{\fnsymbol{footnote}}
  Minimal unique palindromic substrings\\after single-character substitution\footnote{This is a full version of a paper presented at SPIRE 2021~\cite{FunakoshiM21MUPSEdit}.}}
\author[1,2]{Mitsuru~Funakoshi}
\author[1,2]{Takuya~Mieno}
\affil[1]{Department of Informatics, Kyushu University.
\texttt{\{{mitsuru.funakoshi,takuya.mieno\}@inf.kyushu-u.ac.jp}}
}
\affil[2]{Japan Society for the Promotion of Science.}
\date{}
\maketitle
\begin{abstract}
A \emph{palindrome} is a string that reads the same forward and backward.
  A palindromic substring $w$ of a string $T$ is called a \emph{minimal unique palindromic substring}~(\emph{MUPS}) of $T$
  if $w$ occurs only once in $T$ and any proper palindromic substring of $w$ occurs at least twice in $T$.
MUPSs are utilized for answering the \emph{shortest unique palindromic substring problem}, which is motivated by molecular biology~[Inoue et al., 2018].
  Given a string $T$ of length $n$, all MUPSs of $T$ can be computed in $O(n)$ time.
In this paper, we study the problem of updating the set of MUPSs when a character in the input string $T$ is substituted by another character.
  We first analyze the number $d$ of changes of MUPSs when a character is substituted, and show that $d$ is in $O(\log n)$.
Further, we present an algorithm that uses $O(n)$ time and space for preprocessing,
and updates the set of MUPSs in $O(\log\sigma + (\loglog n)^2 + d)$ time where $\sigma$ is the alphabet size.
We also propose a variant of the algorithm, which runs in optimal $O(1+d)$ time when the alphabet size is constant.
\end{abstract}
 \section{Introduction}
Palindromes are strings that read the same forward and backward.
Finding palindromic structures has important applications to analyze biological data such as DNA, RNA, and proteins,
and thus algorithms and combinatorial properties on palindromic structures have been heavily studied~(e.g., see~\cite{Manacher75,Apostolico1995parallel,DroubayJP01,matsubara_tcs2009,Rubinchik2018eertree,GawrychowskiIIK18,RubinchikS20} and references therein).
In this paper, we treat a notion of palindromic structures
called \emph{minimal unique palindromic substring}~(\emph{MUPS})
that is introduced in~\cite{inoue2018algorithms}.
A palindromic substring $T[i.. j]$ of a string $T$ is called a MUPS of $T$
if $T[i..j]$ occurs exactly once in $T$ and $T[i+1..j-1]$ occurs at least twice in $T$.
MUPSs are utilized for solving
the \emph{shortest unique palindromic substring}~(\emph{SUPS})
problem proposed by Inoue et al.~\cite{inoue2018algorithms},
which is motivated by an application in molecular biology.
They showed that there are no more than $n$ MUPSs in any length-$n$ string,
and proposed an $O(n)$-time algorithm to compute all MUPSs of a given string of length $n$ over an integer alphabet of size $n^{O(1)}$.
After that, Watanabe et al.~\cite{WatanabeNIBT20} considered the problem of computing MUPSs in an \emph{run-length encoded}~(\emph{RLE}) string.
They showed that there are no more than $m$ MUPSs in a string whose RLE size is $m$.
Also, they proposed an $O(m\log\sigma_R)$-time and $O(m)$-space algorithm to compute all MUPSs of a string given in RLE,
where $\sigma_R$ is the number of distinct single-character runs in the RLE string.
Recently, Mieno et al.~\cite{Mieno2020EERTREE}
considered the problems of computing palindromic structures in the \emph{sliding window} model.
They showed that the set of MUPSs in a sliding window can be maintained in a total of $O(n\log\sigma_W)$ time and $O(D)$ space
while a window of size $D$ shifts over a string of length $n$ from the left-end to the right-end,
where $\sigma_W$ is the maximum number of distinct characters in the windows.
This result can be rephrased as follows:
The set of MUPSs in a string of length $D$ can be updated in
amortized $O(\log \sigma_W)$ time using $O(D)$ space
after deleting the first character or inserting a character to the right-end.

To the best of our knowledge, there is no efficient algorithm for updating the set of MUPSs after editing a character at \emph{any}
position so far.
Now, we consider the problem of updating the set of MUPSs in a string after substituting a character at any position.
Formally, we tackle the following problem:
Given a string $T$ of length $n$ over an integer alphabet of size $n^{O(1)}$ to preprocess,
and then given a query of single-character substitution.
Afterwards, we return the set of MUPSs of the edited string.
In this paper, we first show that the number $d$ of changes
of MUPSs after a single-character substitution is $O(\log n)$.
In addition, we present an algorithm that uses $O(n)$ time and space for preprocessing,
and updates the set of MUPSs in $O(\log\sigma + (\loglog n)^2 + d) \subset O(\log n)$ time.
We also propose a variant of the algorithm, which runs in optimal $O(1+d)$ time when the alphabet size is constant.

\subsubsection*{Related Work.}
There are some results for the problem of computing string regularities, including palindromes,
on dynamic strings~\cite{AmirCPR19,DBLP:journals/corr/abs-1906-09732,AmirBCK19,Charalampopoulos20}.
In this work, we consider the problem of computing MUPSs after a single edit operation as a first step toward a fully dynamic setting.
This line of research was initiated by Amir et al.~\cite{AmirCIPR17},
who tackled the problem of computing the longest common factor after one edit.
After that, other notions of string regularities are treated in a similar setting~\cite{UrabeNIBT18,AbedinH0T18,Funakoshi2021longestpal_afteredit}.
In particular, regarding palindromic structures,
Funakoshi et al.~\cite{Funakoshi2021longestpal_afteredit} proposed algorithms for computing the longest palindromic substring
after single-character or block-wise edit operations.

\subsubsection*{Paper Organization.}
  This paper is organized as follows.
  Section~\ref{sec:preliminaries} introduces necessary notations and algorithmic tools.
  Section~\ref{sec:num_of_changes} analyzes the changes of the set of MUPSs after a single-character substitution.
  We then present an algorithm for updating the set of MUPSs after a single-character substitution in Section~\ref{sec:algorithm}.
  Finally, we conclude in Section~\ref{sec:conclusions}. \section{Preliminaries} \label{sec:preliminaries}
\subsection{Notations}
\subsubsection*{Strings.}
Let $\Sigma$ be an \emph{alphabet} of size $\sigma$.
An element of $\Sigma$ is called a \emph{character}.
An element of $\Sigma^\ast$ is called a \emph{string}.
The length of a string $T$ is denoted by $|T|$.
The \emph{empty string} $\varepsilon$ is the string of length $0$.
For a string $T = xyz$, then $x, y$, and $z$ are called a \emph{prefix}, \emph{substring}, and \emph{suffix} of $T$, respectively.
They are called a \emph{proper prefix}, \emph{proper substring}, \emph{proper suffix} of $T$ if $x \ne T$, $y \ne T$, and $z \ne T$, respectively.
For each $1 \le i \le |T|$, $T[i]$ denotes the $i$-th character of $T$.
For each $1 \le i \le j \le |T|$, $T[i.. j]$ denotes the substring of $T$ starting at position $i$ and ending at position $j$.
For convenience, let $T[i'.. j'] = \varepsilon$ for any $i' > j'$.
A positive integer $p$ is said to be a \emph{period} of a string $T$ if $T[i] = T[i+p]$ for all $1 \le i \le |T|-p$.
For strings $X$ and $Y$, let $\lcp(X, Y)$ denotes the length of the \emph{longest common prefix}~(in short, lcp) of $X$ and $Y$, i.e.,
$\lcp(X, Y) = \max\{\ell \mid X[1.. \ell] = Y[1.. \ell]\}$.
For a string $T$ and two integers $1 \le i \le j \le |T|$, let $\lce_T(i, j)$ denotes
the length of the \emph{longest common extension}~(in short, lce) of $i$ and $j$ in $T$, i.e., $\lce_T(i, j) = \lcp(T[i..|T|], T[j..|T|])$.
For non-empty strings $T$ and $w$, $\beg_T(w)$ denotes the set of beginning positions of occurrences of $w$ in $T$.
Also, for a text position $i$ in $T$,
$\inbeg_{T, i}(w)$ denotes the set of beginning positions of occurrences of $w$ in $T$ where each occurrence covers position $i$.
Namely, $\beg_T(w) = \{b \mid T[b.. e] = w\}$ and $\inbeg_{T, i}(w) = \{b \mid T[b.. e] = w \textrm{ and } i \in [b, e]\}$.
Further, let $\xbeg_{T, i}(w) = \beg_T(w) \setminus \inbeg_{T, i}(w)$.
For convenience, $|\beg_T(\varepsilon)| = |\inbeg_{T, i}(\varepsilon)| = |\xbeg_{T, i}(\varepsilon)| = |T| + 1$ for any $T$ and $i$.
We say that $w$ is \emph{unique} in $T$ if $|\beg_T(w)| = 1$, and that $w$ is \emph{repeating} in $T$ if $|\beg_T(w)| \ge 2$.
Note that the empty string is repeating in any other string.
Since every unique substring $u = T[i..j]$ of $T$ occurs exactly once in $T$, we will sometimes identify $u$ with its corresponding interval $[i, j]$.
In what follows, we consider an arbitrarily fixed string $T$ of length $n \ge 1$ over an alphabet $\Sigma$ of size $\sigma = n^{O(1)}$.
\subsubsection*{Palindromes.}
For a string $w$, $w^R$ denotes the reversed string of $w$.
A string $w$ is called a \emph{palindrome} if $w = w^R$.
A palindrome $w$ is called an \emph{even-palindrome}~(resp. \emph{odd-palindrome}) if its length is even~(resp. odd).
For a palindrome $w$, its length-$\lfloor |w|/2 \rfloor$ prefix~(resp. length-$\lfloor |w|/2 \rfloor$ suffix)
is called the \emph{left arm}~(resp. \emph{right arm}) of $w$, and is denoted by $\larm_w$~(resp. $\rarm_w$).
Also, we call $\Larm_w = \larm_w\cdot s_w$~(resp. $\Rarm_w = s_w\cdot\rarm_w$) the \emph{extended left arm}~(resp. \emph{extended right arm}) of $w$ where $s_w$ is the character at the center of $w$ if $w$ is an odd-palindrome, and $s_w$ is empty otherwise.
Note that when $w$ is an even-palindrome, $\Rarm_w = \rarm_w$ and $\Larm_w = \larm_w$.
For a non-empty palindromic substring $w = T[i.. j]$ of a string $T$, the center of $w$ is $\frac{i+j}{2}$ and is denoted by $\centerpal(w)$.
A non-empty palindromic substring $T[i.. j]$ of a string $T$ is said to be \emph{maximal} if $i = 1$, $j = n$, or $T[i-1] \ne T[j+1]$.
For a non-empty palindromic substring $w = T[i.. j]$ of a string $T$ and a non-negative integer $\ell$,
$v = T[i-\ell.. j+\ell]$ is said to be an \emph{expansion} of $w$ if $1 \le i-\ell \le j+\ell \le n$ and $v$ is a palindrome.
Also, $T[i+\ell.. j-\ell]$ is said to be a \emph{contraction} of $w$.
Now we show periodic properties of palindromic suffixes of $T[1..i]$.
Let $\mathbf{S}_i = \{ s_1, \ldots, s_{g}\}$ be the set of lengths of palindromic suffixes of $T[1..i]$,
where $g$ is the number of palindromic suffixes of $T[1..i]$
and $s_{k-1} \leq s_k$ for $2 \leq k \leq g$.
Let $p_k$ be the progression difference for $s_k$,
i.e., $p_k = s_{k} - s_{k-1}$ for $2 \leq k \leq g$.
For convenience, let $p_1 = 1$.

Then, the following results are known:

\begin{lemma}[\cite{Apostolico1995parallel,GasieniecSWAT96,matsubara_tcs2009}]
  \label{lem:palindromic_suffixes}
  \hfill
  \begin{enumerate}
    \item[(A)] For any $1 \leq k < g$, $p_{k+1} \geq p_{k}$.
    \item[(B)] For any $1 < k < g$, if $p_{k+1} \neq p_{k}$, then $p_{k+1} \geq p_{k} + p_{k-1}$.
    \item[(C)] $\mathbf{S}_i$ can be represented by $O(\log i)$ arithmetic progressions,
      where each arithmetic progression is a tuple $\langle a, p, f \rangle$ representing the sequence $a, a+p, \ldots, a + (f-1)p$ of lengths of $f$ palindromic suffixes with common difference $p$.
    \item[(D)] The common difference $p$ is a smallest period of every palindromic suffixes of $T[1..i]$ whose length belongs to the arithmetic progression $\langle a, p, f \rangle$.
  \end{enumerate}
\end{lemma}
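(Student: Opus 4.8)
The plan is to derive all four parts from two elementary facts applied along the chain $v_1,\dots,v_g$ of palindromic suffixes of $T[1..i]$ ordered by length, so that $|v_k|=s_k$ and each $v_k$ is a suffix of $v_{k+1}$. The facts are: (i) a nonempty string $b$ is a proper border of $x$ (a proper prefix that is also a suffix) iff $|x|-|b|$ is a period of $x$, the longest proper border corresponding to the smallest period; and (ii) a nonempty string is a proper border of a palindrome $x$ iff it is a proper palindromic suffix of $x$ --- reversing a suffix of a palindrome gives a prefix, and conversely any border of a palindrome is itself a palindrome. Since the palindromic suffixes of $v_{k+1}$ are exactly $v_1,\dots,v_{k+1}$, fact (ii) makes the nonempty proper borders of $v_{k+1}$ precisely $v_1,\dots,v_k$, so by fact (i) the smallest period of $v_{k+1}$ is $s_{k+1}-s_k=p_{k+1}$; likewise the smallest period of $v_k$ is $p_k$. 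I will also use that a period of a string is inherited by every substring of length at least that period.

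For (A), fix $1\le k<g$. If $p_{k+1}\ge s_k$, then since $p_k=s_k-s_{k-1}<s_k$ for $k\ge 2$ (and $p_1=1\le p_2$ trivially) we get $p_{k+1}\ge p_k$. Otherwise $p_{k+1}<s_k$, so the prefix $v_k$ of $v_{k+1}$ has length $s_k>p_{k+1}$ and therefore inherits the period $p_{k+1}$; as $p_k$ is the smallest period of $v_k$, this forces $p_{k+1}\ge p_k$.

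For (B), assume $p_{k+1}\ne p_k$, so by (A) $p_{k+1}>p_k$; the case $k=2$ is immediate ($p_3\ge p_2+1=p_2+p_1$), so take $k\ge 3$. If $p_{k+1}\ge s_k$, then using $s_{k-2}\ge 1$ we get $p_{k+1}\ge s_k>s_k-s_{k-2}=p_k+p_{k-1}$. Otherwise $p_{k+1}<s_k$: then $v_k$ has period $p_{k+1}$ (inherited from $v_{k+1}$), hence a nonempty proper border of length $s_k-p_{k+1}$, which by fact (ii) is some palindromic suffix $v_j$ with $s_j=s_k-p_{k+1}$; since $p_{k+1}>p_k$ we have $s_j<s_k-p_k=s_{k-1}$, so $v_j$ is also a proper border of $v_{k-1}$, giving $v_{k-1}$ the period $s_{k-1}-s_j=p_{k+1}-p_k$; as $p_{k-1}$ is the smallest period of $v_{k-1}$ (here $k\ge 3$ ensures $v_{k-2}$ exists), we conclude $p_{k+1}-p_k\ge p_{k-1}$, i.e.\ $p_{k+1}\ge p_k+p_{k-1}$. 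The main obstacle is exactly this last step --- recognizing that $v_{k-1}$ acquires the period $p_{k+1}-p_k$ from the border carried through $v_k$; everything else is case bookkeeping.

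Finally, (C) and (D) are packaging. By (A) the sequence $p_1\le p_2\le\cdots\le p_g$ is nondecreasing, hence splits into maximal constant runs, and a run $p_a=\cdots=p_b=p$ gives the arithmetic progression $\langle s_a,p,b-a+1\rangle=(s_a,\dots,s_b)$; these progressions (the first one also covering $s_1$, via the convention $p_1=1$) partition $\mathbf{S}_i$, and their number equals the number of distinct values among the $p_k$'s. This yields (D): for $s_k$ in such a progression, $p=p_k$ and the longest proper border of $v_k$ is $v_{k-1}$ of length $s_k-p$, so $p$ is the smallest period of $v_k$. For (C), let $q_1<\cdots<q_m$ be those distinct values; at the index $k$ where the value jumps from $q_t$ to $q_{t+1}$ one has $k\ge 2$, $p_k=q_t$, and $p_{k-1}\ge q_{t-1}$ by monotonicity, so part (B) gives $q_{t+1}\ge q_t+q_{t-1}$ for $t\ge 2$ (with $q_2>q_1$ seeding the recursion). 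Thus the $q_t$ grow at least as fast as the Fibonacci numbers, and since $q_m<s_g\le i$ we obtain $m=O(\log i)$, which is also the number of progressions; hence $\mathbf{S}_i$ is represented by $O(\log i)$ arithmetic progressions.
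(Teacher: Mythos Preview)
The paper does not supply its own proof of this lemma; it is quoted from \cite{Apostolico1995parallel,GasieniecSWAT96,matsubara_tcs2009} and used as a black box. Your argument is correct and is essentially the classical one found in those references: since the proper borders of a palindrome are exactly its proper palindromic suffixes, the longest proper border of $v_{k}$ is $v_{k-1}$, making $p_k$ the smallest period of $v_k$; inheritance of periods by prefixes then gives (A) and, via the border $v_j$ of $v_{k-1}$, the Fibonacci-type inequality (B); grouping the $s_k$ by the constant runs of the nondecreasing sequence $(p_k)$ yields the progressions, with (D) immediate and (C) following because the distinct values $q_1<q_2<\cdots$ satisfy $q_{t+1}\ge q_t+q_{t-1}$ and are bounded by $i$.

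One small remark for tidiness: your appeal to (B) at a jump index $k$ requires $k>1$, which you correctly ensure only for $t\ge2$; it may be worth stating explicitly that the jump from $q_1=1$ to $q_2$ needs no use of (B) because $q_2\ge 2$ already seeds the Fibonacci bound.
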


For each $1 \leq j \leq f$,
we will use the convention that $a(j) = a + (j-1)p$,
namely $a(j)$ denotes the $j$-th shortest element for $\langle a, p, f \rangle$.
For simplicity, let $Y = T[1..i]$ and $Z = T[i+1..n]$.
Let $\lcp(x,y)$ for strings $x$ and $y$ denote the length of the longest common prefix of $x$ and $y$.
Also, let $\Ext(a(j))$ denote the length of the maximal palindrome
that is obtained by extending $a(j)$ in $YZ$.

\begin{lemma}[\cite{matsubara_tcs2009}]\label{lem:batched_extension}
  For any $\langle a, p, f \rangle$ of palndromic suffixes of $T[1..i]$,
  there exist palindromes $u, v$ and a non-negative integer $q$,
  such that $(uv)^{f+q-1} u$ (resp. $(uv)^q u$) is
  the longest (resp. shortest) palindromic suffixes represented by $\langle a, p, f \rangle$
  with $|uv| = p$.
  Let $\alpha = \lcp((Y[1..|Y|-a(1)])^R, Z)$
  and $\beta = \lcp((Y[1..|Y|-a(f)])^R, Z)$.
  Then $\Ext(a(j)) = a(j) + 2\min \{\alpha, \beta + (f-j)p\}$.
  Further, if there exists $a(h) \in \langle a, p, f \rangle$ such that $a(h) + \alpha = a(f) + \beta$,
  then $\Ext(a(h)) = a(h) + 2\lcp((Y[1..|Y|-a(h)])^R, Z) \geq \Ext(a(j))$
  for any $j \neq h$.
\end{lemma}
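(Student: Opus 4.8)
The plan is to exploit the periodic structure guaranteed by Lemma~\ref{lem:palindromic_suffixes}(D): every palindromic suffix of $Y = T[1..i]$ whose length lies in $\langle a,p,f\rangle$ has $p$ as a smallest period. First I would establish the claimed $u,v,q$ decomposition. Writing $w$ for the shortest such suffix, of length $a(1) = a$, the period $p$ together with $|w| \ge p$ (which holds whenever $f \ge 2$; the degenerate cases are handled separately) forces $w$ to have the form $(uv)^q u$ where $uv$ is the length-$p$ prefix of $w$ and $u$ is the "remainder". Because $w$ is itself a palindrome and $p$ is its period, a standard argument shows $u$ and $v$ are both palindromes (this is the classical fact that a palindrome with period $p$ is a product of two palindromes of total length $p$). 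The longer suffixes in the progression are obtained by prepending copies of the period while staying palindromic, so the $j$-th one is $(uv)^{q+j-1}u$; in particular the longest is $(uv)^{q+f-1}u$.

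Next I would derive the formula $\Ext(a(j)) = a(j) + 2\min\{\alpha,\beta+(f-j)p\}$. Extending the palindrome of length $a(j)$ in $YZ$ to the right reads characters of $Z$; extending to the left reads characters of $Y$ backwards, i.e.\ characters of $Y^R$. So the number of extra matched pairs is exactly $\lcp\big((Y[1..|Y|-a(j)])^R, Z\big)$, and $\Ext(a(j)) = a(j) + 2\lcp\big((Y[1..|Y|-a(j)])^R, Z\big)$. The key point is to relate this lcp for a general $j$ to the two extreme ones, $\alpha$ (for $j=1$) and $\beta$ (for $j=f$). Using that $p$ is a period of all these suffixes, the string $Y[1..|Y|-a(j)]$ for $1 \le j \le f$ differ from one another by exactly $(f-j)p$ characters at the boundary, and these boundary characters are forced by the periodicity. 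Concretely, $(Y[1..|Y|-a(1)])^R$ is a prefix of $(Y[1..|Y|-a(j)])^R$ up to an offset governed by the period, and comparing with $Z$ one sees that the lcp is capped either by $\alpha$ (running out of agreement with the "innermost" extension) or by $\beta+(f-j)p$ (running out of the periodic run after absorbing the $(f-j)p$-character head). Taking the minimum gives the stated expression; one checks the two regimes $\alpha \le \beta+(f-j)p$ and $\alpha > \beta+(f-j)p$ separately, the second corresponding to the case where the right extension breaks the period.

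Finally, for the "further" clause, suppose some $a(h)$ satisfies $a(h)+\alpha = a(f)+\beta$. I would argue that at this index the two caps in the $\min$ coincide: plugging $j=h$ gives $\beta+(f-h)p$, and the hypothesis together with the arithmetic-progression identity $a(f) - a(h) = (f-h)p$ yields $\alpha = \beta + (f-h)p$, so $\min\{\alpha, \beta+(f-h)p\} = \alpha = \lcp\big((Y[1..|Y|-a(h)])^R, Z\big)$ with no truncation, hence $\Ext(a(h)) = a(h) + 2\lcp\big((Y[1..|Y|-a(h)])^R, Z\big)$. For any other $j$, $\Ext(a(j)) = a(j) + 2\min\{\alpha,\beta+(f-j)p\} \le a(j) + 2\alpha$ and also $\le a(j) + 2(\beta+(f-j)p) = a(f) + 2\beta - (f-j)p \cdot (\text{sign})$; combining with $a(h)+\alpha = a(f)+\beta \ge a(j)+\beta+(f-j)p - (f-h)p$-type bookkeeping shows $\Ext(a(j)) \le \Ext(a(h))$. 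The main obstacle I anticipate is the careful index bookkeeping in this last inequality and in the two-regime analysis of the $\min$: one must track simultaneously how prepending $(f-j)$ periods to the left shifts the left arm and how the right extension into $Z$ is or is not period-preserving, and get the off-by-$p$ offsets exactly right. The palindrome-as-product-of-two-palindromes fact and the synchronization of the periodic run across all $j$ are the conceptual crux; everything else is arithmetic on arithmetic progressions.
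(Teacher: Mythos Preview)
The paper does not actually prove this lemma; it is quoted verbatim as a result of Matsubara et al.~\cite{matsubara_tcs2009} and used as a black box to derive Corollary~\ref{cor:group_of_palsuf}. So there is no ``paper's own proof'' to compare against. That said, your sketch follows the standard argument from the cited source: the $(uv)^{q+j-1}u$ decomposition via the palindrome-with-period-$p$ fact, the identification $\Ext(a(j)) = a(j) + 2\lcp((Y[1..|Y|-a(j)])^R, Z)$, and the two-regime analysis of how far the right extension into $Z$ can go before it either exhausts the periodic run (cap $\beta + (f-j)p$) or hits the first genuine mismatch inside the run (cap $\alpha$). The ``further'' clause is exactly the observation that when the two caps coincide at some $h$, that index realises the maximum because for $j<h$ the $\alpha$-cap is binding and $a(j)<a(h)$, while for $j>h$ the $\beta+(f-j)p$-cap is binding and $a(j)+2(\beta+(f-j)p) = 2a(f)+2\beta - a(j)$ is decreasing in $j$. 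Your self-diagnosed weak spot---the off-by-$p$ bookkeeping in the last paragraph---is real but routine; the inequality you want is simply $a(j)+2\min\{\alpha,\beta+(f-j)p\} \le a(h)+2\alpha$, which splits cleanly into the two cases $j\le h$ and $j\ge h$ once you use $\alpha = \beta+(f-h)p$.
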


By Lemmas~\ref{lem:palindromic_suffixes} and~\ref{lem:batched_extension}, we obtain the following corollary:
\begin{corollary} \label{cor:group_of_palsuf}
  For a position $i$, divide the set of palindromic suffixes of $T[1..i]$ into groups $G_1, G_2, \ldots, G_{m_i}$
  w.r.t their smallest periods.
  Then, $m_i \in O(\log i)$ holds.
  Also, for each group $G_k$, the following properties hold:
  \begin{enumerate}
    \item \label{item:diff_of_centers}
      The difference between centers of any two palindromes in $G_k$ is an integer power of $0.5p_k$, where $p_k$ is their smallest period.
    \item \label{item:period_of_mpal}
      For all maximal palindromes $e_1, \ldots, e_t$ in $T$ that are expansions of palindromes in $G_k$,
      excluding at most one, their smallest period is also $p_k$.
    \item \label{item:longest_candidate}
      Among $e_1, \ldots, e_t$, any palindrome is contained by the longest one or the second longest one.
  \end{enumerate}
\end{corollary}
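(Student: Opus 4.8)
The plan is to handle the three claims in sequence, relying on Lemmas~\ref{lem:palindromic_suffixes} and~\ref{lem:batched_extension}.

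\emph{Bound on $m_i$ and claim~(1).} I would first note that, by Lemma~\ref{lem:palindromic_suffixes}(D), the common difference of each arithmetic progression $\langle a,p,f\rangle$ is the smallest period of all palindromic suffixes whose lengths lie in it, while by Lemma~\ref{lem:palindromic_suffixes}(A)--(B) the progression differences are nondecreasing and strictly increase at each boundary. Hence distinct progressions carry distinct smallest periods, so grouping the palindromic suffixes of $T[1..i]$ by smallest period is exactly the decomposition of Lemma~\ref{lem:palindromic_suffixes}(C); this yields $m_i=O(\log i)$ and identifies $G_k$ with a single progression $\langle a,p_k,f\rangle$. Since every palindromic suffix of $T[1..i]$ ends at position $i$, one of length $\ell$ has center $i-(\ell-1)/2$; as the members of $G_k$ have lengths $a,a+p_k,\dots,a+(f-1)p_k$, their centers form an arithmetic progression with common difference $0.5p_k$, so the difference of any two of them is a multiple of $0.5p_k$.

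\emph{Claims~(2) and~(3).} Let $w_1,\dots,w_f$ be the members of $G_k$ in increasing order of length and let $e_j$ be the maximal palindrome of length $\Ext(a(j))$ obtained by extending $w_j$ in $T$; these are the palindromes $e_1,\dots,e_t$ of the statement (so $t=f$). I would use two elementary facts: a palindromic prefix or suffix $u$ of a palindrome $w$ makes $|w|-|u|$ a period of $w$; and any period of a string is a period of each of its substrings, so --- since $w_j$ has smallest period $p_k$ by Lemma~\ref{lem:palindromic_suffixes}(D) --- once $e_j$ has $p_k$ as a period its smallest period is exactly $p_k$. By Lemma~\ref{lem:batched_extension}, for each $j$ other than the distinguished index $h$ (if it exists) $e_j$ is obtained from $w_j$ by extending $\min\{\alpha,\beta+(f-j)p_k\}$ characters on each side; the indices with the $\alpha$-branch active form a left family whose $e_j$ share the right endpoint $i+\alpha$, and the others a right family whose $e_j$ share the left endpoint $i-a(f)+1-\beta$. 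Within a family the $e_j$ are nested, with consecutive lengths differing by exactly $p_k$, so by the first fact each $e_j$ ($j\neq h$) has period $p_k$, hence (by the second fact) smallest period $p_k$: this is claim~(2), with $e_h$ as the sole possible exception. For claim~(3) I would observe that each $e_j$ in a family is a substring of that family's longest member; that when $h$ exists one must have $\beta=0$ (the character just left of $w_f$ breaks the period $p_k$, which forces $\alpha=0$ or $\beta=0$, and $\alpha=0$ forces $h=f$ and hence $\beta=0$ via the defining relation $a(h)+\alpha=a(f)+\beta$); and that then a short computation with $\alpha,\beta,p_k$ and the periodic structure of $w_f$ shows $e_h$ contains every $e_j$. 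When $h$ does not exist there are just two families, whose longest members are the two longest among $e_1,\dots,e_t$. Either way every $e_j$ lies inside the longest or the second longest of $e_1,\dots,e_t$.

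\emph{Expected main obstacle.} The hard part will be the degenerate configurations where a family has a single member $e_j$ (e.g.\ $h\in\{1,2,f-1,f\}$), since the first elementary fact no longer applies to it and one must show directly that $e_j$ retains the period $p_k$, so that $e_h$ is still the only exception. The key point to make precise is that the maximal run of period $p_k$ around $w_f$ does not extend left of $w_f$ --- otherwise prepending $uv$, where $|uv|=p_k$ and $w_f=(uv)^{f-1}w_1$ as in Lemma~\ref{lem:batched_extension}, would give a palindromic suffix of smallest period $p_k$ that extends the progression, contradicting Lemma~\ref{lem:palindromic_suffixes} --- yet the run \emph{does} extend right of position $i$ by at least $\alpha$, because extending $w_1$ rightward reproduces the $uv$-blocks. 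Pinning down these cases, together with the containment computation behind claim~(3), is where the argument needs the most care.
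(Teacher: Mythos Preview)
The paper does not give an explicit proof of this corollary: it simply writes ``By Lemmas~\ref{lem:palindromic_suffixes} and~\ref{lem:batched_extension}, we obtain the following corollary'' and states it. Your proposal is therefore not competing with a paper proof so much as supplying the derivation the paper leaves implicit, and it does so along exactly the intended route. Your arguments for $m_i\in O(\log i)$ and for claim~(1) are correct (the paper's phrase ``integer power of $0.5p_k$'' should be read as ``integer multiple,'' as you do), and splitting the $e_j$ into the two endpoint-sharing families coming from the two branches of the $\min$ in Lemma~\ref{lem:batched_extension} is the right mechanism for claims~(2) and~(3).

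One caution on the part you already flag as delicate: the auxiliary claim ``when $h$ exists one must have $\beta=0$'' is not correct. The character just left of $w_f$ need not break the period $p_k$ --- the period-$p_k$ run may extend fewer than $p_k$ positions to the left of $w_f$ without creating a longer palindromic suffix in $G_k$ --- so nothing forces $\alpha=0$ or $\beta=0$. For claim~(3) you are better off arguing directly with interval endpoints: the left family shares right endpoint $i+\alpha$, the right family shares left endpoint $i-a(f)+1-\beta$, so within each family everything sits inside that family's longest member; then use that Lemma~\ref{lem:batched_extension} already asserts $\Ext(a(h))\ge\Ext(a(j))$ for all $j$, together with an endpoint comparison, to place $e_h$ above both. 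For the singleton-family issue in claim~(2), rather than the $\beta=0$ route, observe that any $e_j$ with $j\ne h$ has its left endpoint at $i-a(f)+1-\beta$ or its right endpoint at $i+\alpha$, so it is a prefix or suffix of the corresponding family's longest member once that family has at least two elements; the remaining borderline index is precisely $h$, which is the allowed exception.
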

We remark that symmetric arguments hold for palindromic prefixes as well.

A non-empty string $w$ is called a \emph{1-mismatch palindrome} if
there is exactly one mismatched position between
$w[1..\lfloor |w|/2 \rfloor]$ and $w[\lceil |w|/2 \rceil +1..|w|]^R$.
Informally, a 1-mismatch palindrome is a pseudo palindrome with a mismatch position between their arms.
As in the case of palindromes,
a 1-mismatch palindromic substring $T[i.. j]$ of a string $T$ is said to be \emph{maximal} if $i = 1$, $j = n$ or $T[i-1] \ne T[j+1]$.

A palindromic substring $T[i.. j]$ of a string $T$ is called a \emph{minimal unique palindromic substring} (\emph{MUPS}) of $T$ if $T[i..j]$ is unique in $T$ and $T[i+1..j-1]$ is repeating in $T$.
We denote by $\MUPS(T)$ the set of intervals corresponding to MUPSs of a string $T$.
A MUPS cannot be a substring of another palindrome with a different center.
Also, it is known that the number of MUPSs of $T$ is at most $n$,
and set $\MUPS(T)$ can be computed in $O(n)$ time for a given string $T$ over an integer alphabet~\cite{inoue2018algorithms}.
The following lemma states that the total sum of occurrences of strings which are extended arms of MUPSs is $O(n)$:
\begin{lemma}\label{lem:different_occ}
  The total sum of occurrences of the extended right arms of all MUPSs in a string $T$ is at most $2n$.
  Similarly, the total sum of occurrences of the extended left arms of all MUPSs in $T$ is at most $2n$.
\end{lemma}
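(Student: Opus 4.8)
The plan is to bound the total sum of occurrences of extended right arms of MUPSs by charging each occurrence to a text position and showing no position is charged too often. Consider any MUPS $w = T[i..j]$ with extended right arm $\Rarm_w = s_w \cdot \rarm_w$, where $\rarm_w = T[\centerpal(w)+\lceil 1/2\rceil .. j]$ is the length-$\lfloor |w|/2 \rfloor$ suffix of $w$ (adjusting indices for the odd/even cases via $s_w$). The key structural fact I would exploit is that a MUPS is unique in $T$ and cannot be a substring of another palindrome with a different center, so distinct MUPSs are ``spread out'' in a controlled way. First I would fix an occurrence of $\Rarm_w$ in $T$, say ending at position $e$, and associate to it the center position one would obtain by treating that occurrence as the right arm of a palindrome centered just to its left; call this the \emph{induced center}. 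The aim is to show that for a fixed center $c$ (over half-integers), at most two (MUPS, occurrence) pairs induce center $c$ — one from an odd-palindrome MUPS and one from an even-palindrome MUPS, or more carefully a small constant number — which then gives the bound $2n$ since there are at most $2n$ distinct (half-integer) center positions in a length-$n$ string.

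The core argument for the ``at most two per center'' claim: suppose two distinct MUPSs $w_1 = T[i_1..j_1]$ and $w_2 = T[i_2..j_2]$ each have an occurrence of their extended right arm inducing the same center $c$. Then both $\Rarm_{w_1}$ and $\Rarm_{w_2}$ occur as suffixes of palindromes centered at $c$, hence one is a suffix of the other, say $|\Rarm_{w_1}| \le |\Rarm_{w_2}|$. Reflecting around $c$, this means $w_1$ (or a palindrome with the same center as $w_1$ of the appropriate length) is a contraction of the palindrome of $w_2$-length centered at $c$; combined with the fact that $w_1$ is a \emph{unique} palindrome, while $w_1$ would then occur both at its own location and (reflected) near $c$, I can force $w_1$ and $w_2$ to share a center, or force the occurrence near $c$ to coincide with the MUPS's own occurrence. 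Since a MUPS has exactly one occurrence, only that occurrence's arm can ``land'' at each center with the MUPS's own center; the remaining freedom is the parity distinction (the center $s_w$ being empty or a single character shifts things by a half-integer), giving the factor handled by counting both integer and half-integer centers, i.e. the $2n$ rather than $n$.

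The step I expect to be the main obstacle is making the ``induced center'' assignment precise and injective-enough across the odd/even cases simultaneously. The asymmetry between odd-palindromes (where $\Rarm_w$ includes the center character $s_w$) and even-palindromes (where $\Rarm_w = \rarm_w$) means the map from an arm-occurrence to a center lands sometimes on an integer and sometimes on a half-integer, and I must verify that the two families don't collide in a way that inflates the count beyond $2$ per center — or, alternatively, reorganize the counting so each of the $n$ integer centers and $n{-}1$ (or $n{+}1$, counting boundaries) half-integer centers absorbs $O(1)$ charges with the constants working out to exactly $2n$. A clean way to sidestep some of this: use Corollary~\ref{cor:group_of_palsuf} item~\ref{item:longest_candidate} together with the uniqueness of MUPSs to argue that the occurrences of an extended right arm of a MUPS correspond to distinct centers of maximal palindromes ``just beyond'' that arm, and that each such maximal-palindrome center can serve at most a bounded number of MUPSs because a MUPS is never contained in a palindrome of a different center — I would check that this bookkeeping, summed over the at most $2n$ relevant centers, yields the stated $2n$ bound, and then invoke the left–right symmetry of palindromes to conclude the analogous statement for extended left arms.
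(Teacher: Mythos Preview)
Your charging scheme is essentially equivalent to the paper's --- assigning each arm-occurrence to its ``induced center'' is the same as assigning it to its starting position and then splitting by parity --- but your argument for why at most one MUPS of each parity can be charged to a given center has a real gap. You write that if $\Rarm_{w_1}$ and $\Rarm_{w_2}$ both induce center $c$, then ``both $\Rarm_{w_1}$ and $\Rarm_{w_2}$ occur as suffixes of palindromes centered at $c$'' and then try to ``reflect around $c$'' to place a copy of $w_1$ near $c$. This is not justified: an occurrence of an extended right arm at some position does not force any palindrome to be centered there in $T$, so there is nothing to reflect across, and hence no second occurrence of $w_1$ is produced this way. The subsequent appeal to uniqueness of $w_1$ then has no bite.

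The missing observation, which is what the paper uses, is purely string-level and avoids position $c$ entirely: if two extended right arms of same-parity MUPSs start at the same text position, then one is a \emph{prefix} of the other, say $\Rarm_{w_1}$ is a prefix of $\Rarm_{w_2}$. By palindromic symmetry of the strings $w_1$ and $w_2$ themselves (not of anything at $c$), $\larm_{w_1}$ is a suffix of $\larm_{w_2}$, so $w_1$ is a contraction of $w_2$. But then $w_1$ is a substring of $w_2[2..|w_2|-1]$, which is repeating because $w_2$ is a MUPS, contradicting uniqueness of $w_1$. This immediately gives at most $n$ occurrences for odd-length MUPSs and at most $n$ for even-length MUPSs, hence $2n$ total; no reflection, no induced palindromes, and no appeal to Corollary~\ref{cor:group_of_palsuf} are needed.
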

\begin{proof}
  It suffices to prove the former statement for the extended \emph{right} arms since the latter can be proved symmetrically.
  Let $w_1$ and $w_2$ be distinct odd-length MUPSs of $T$ with $|w_1| \le |w_2|$.
  For the sake of contradiction, we assume that $T[j.. j+|\Rarm_{w_1}|-1] = \Rarm_{w_1}$ and $T[j.. j+|\Rarm_{w_2}|-1] = \Rarm_{w_2}$ for some position $j$ in $T$.
  Namely, $\Rarm_{w_1}$ is a prefix of $\Rarm_{w_2}$. Then, $\larm_{w_1}$ is a suffix of $\larm_{w_2}$ by palindromic symmetry.
  This means that $w_1$ is a substring of $w_2$.
  This contradicts that $w_2$ is a MUPS of $T$.
  Thus, all occurrences of the extended right arms of all odd-length MUPSs are different, i.e., the total number of the occurrences is at most $n$.
  Similarly, the total number of all occurrences of the right arms of all even-length MUPSs is also at most $n$.
\end{proof}
 \subsection{Tools}
This subsection lists some data structures used in our algorithm.
Our model of computation is a standard word RAM model with machine word size $\Omega(\log n)$.
\subsubsection*{Suffix Trees.}
The \emph{suffix tree} of $T$ is the compacted trie for all suffixes of $T$~\cite{Weiner1973suffixtree}.
We denote by $\STree(T)$ the suffix tree of $T$.
If a given string $T$ is over an integer alphabet of size $n^{O(1)}$,
$\STree(T)$ can be constructed in $O(n)$ time~\cite{Farach-Colton2000suffixtree}.
Not all substrings of $T$ correspond to nodes in $\STree(T)$.
However, the loci of such substrings can be made explicit in linear time:
\begin{lemma}[Corollary~8.1 in \cite{Kociumaka2020seedscomputation}] \label{lem:make_explicit_node}
  Given $m$ substrings of $T$, represented by intervals in $T$,
  we can compute the locus of each substring in $\STree(T)$ in $O(n + m)$ total time.
  Moreover, the loci of all the substrings in $\STree(T)$ can be made explicit in $O(n + m)$ extra time.
\end{lemma}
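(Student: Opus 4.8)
The plan is to reduce both claims to \emph{batched weighted ancestor queries} on $\STree(T)$ together with a couple of radix sorts. First I would build $\STree(T)$ in $O(n)$ time and, by one DFS, record for every explicit node its string depth (the length of the string it spells) and, for each position $i$, the leaf $\ell_i$ spelling the suffix $T[i..n]$. The key observation is that the locus of a substring $T[i..j]$ lies on the root-to-$\ell_i$ path, at string depth exactly $j-i+1$; hence locating $T[i..j]$ is precisely a weighted ancestor query on that path with target weighted depth $j-i+1$, where edge weights equal the lengths of the substrings the edges spell and all weights are integers in $[1,n]$. So the first task amounts to $m$ weighted ancestor queries on a tree with $O(n)$ nodes, each answer reported as an explicit node or as a pair (suffix-tree edge $e$, offset $\delta$ along $e$).

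To answer these $m$ queries within the $O(n+m)$ budget I would use the standard offline machinery: radix-sort the queries by target leaf and then by target depth, run a single DFS of $\STree(T)$ maintaining the current root-to-node path, and resolve all queries attached to a leaf when the DFS reaches it. A naive realisation --- binary searching each target depth in the current path --- costs $O(n+m\log n)$; tightening this to $O(n+m)$ is the technical heart, obtained by combining a heavy-path decomposition of $\STree(T)$ with fractional cascading along the $O(\log n)$ heavy paths crossed by each root-to-leaf path (alternatively, by invoking a known $O(1)$-time, $O(n)$-space weighted ancestor structure tailored to suffix trees). I expect this to be the only real obstacle; the rest is direct.

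For the ``made explicit'' claim, group the returned loci by the edge $e=(p\to c)$ they lie on, identifying $e$ by the index of its lower endpoint $c$, and radix-sort within each edge by offset $\delta$ --- one more $O(n+m)$ radix sort, since offsets lie in $[1,n]$. Then scan each edge once: if its sorted distinct offsets are $\delta_1<\cdots<\delta_t$, replace $e$ by a path $p\to x_1\to\cdots\to x_t\to c$, assign $x_r$ string depth $\mathrm{strdepth}(p)+\delta_r$, and split the label of $e$ into the corresponding consecutive substrings of $T$; this is $O(1)$ per new node and per query, hence $O(n+m)$ overall. Finally I would redirect each query to its now-explicit locus node. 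The empty substring (the convention $T[i'..j']=\varepsilon$ for $i'>j'$) and repeated queries are handled as trivial special cases.
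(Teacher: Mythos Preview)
The paper does not give its own proof of this lemma: it is quoted verbatim as Corollary~8.1 of Kociumaka et al.~\cite{Kociumaka2020seedscomputation} and used as a black box. So there is no ``paper's proof'' to compare against here; the relevant comparison is with the cited source.

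Your reduction is the right one and matches how the cited result is obtained: the locus of $T[i..j]$ is exactly the point at string depth $j-i+1$ on the root-to-$\ell_i$ path, so the first part is a batch of $m$ weighted ancestor queries on $\STree(T)$, and the second part is an edge-splitting pass after bucketing the answers by edge and radix-sorting offsets. Kociumaka et al.\ prove the $O(n+m)$ bound for the batch of weighted ancestor queries precisely in this offline setting (their Section~8), so invoking that directly is cleaner than your heavy-path-plus-fractional-cascading sketch; your alternative of citing an $O(1)$-per-query weighted ancestor structure for suffix trees also works but is heavier machinery than needed. The only part of your write-up I would tighten is the hand-wave ``fractional cascading along the $O(\log n)$ heavy paths crossed by each root-to-leaf path'': as stated this does not obviously give $O(1)$ per query, since fractional cascading over a catalog graph of degree $>2$ (which the tree of heavy paths is) generally costs $O(\log n)$ per query, not $O(1)$. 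Rather than trying to repair that, just cite the offline batched bound from~\cite{Kociumaka2020seedscomputation} (or an equivalent source) for that step; the rest of your argument---the DFS to get leaves and string depths, the radix sorts, and the edge-splitting scan---is correct and complete.
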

Also, this lemma implies the following corollary:
\begin{corollary}\label{cor:sort_substrings}
  Given $m$ substrings of $T$, represented by intervals in $T$, we can sort them in $O(n + m)$ time.
\end{corollary}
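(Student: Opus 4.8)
The plan is to reduce sorting the $m$ given substrings to a traversal of $\STree(T)$ after the relevant loci have been made explicit. First I would invoke Lemma~\ref{lem:make_explicit_node} on the $m$ intervals, which in $O(n+m)$ total time both computes the locus of each given substring in $\STree(T)$ and makes all these loci explicit nodes of the tree. After this step, each of the $m$ input substrings is identified with an explicit node $v_j$ of the (augmented) suffix tree; note several input intervals may map to the same node, which is fine.

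Next I would perform a single depth-first traversal of the augmented $\STree(T)$ in which, at each explicit node, the outgoing edges are visited in the order of their first characters. The key point is that the order in which explicit nodes are first reached by such a lexicographically-ordered DFS is exactly the lexicographic order of the strings they spell. So as the traversal visits the nodes $v_1,\dots,v_m$ (ignoring nodes that are not among the targets), it enumerates the corresponding substrings in sorted order; ties — several input substrings equal to the same string — are output consecutively. Collecting the input indices attached to each visited target node yields the sorted list of the $m$ substrings.

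The only subtlety is making the per-node edge ordering cheap enough: a naive sort of the children at each node would cost an extra logarithmic factor. Since $T$ is over an integer alphabet of size $n^{O(1)}$, however, the edge labels are (bounded) integers, so the children of all nodes can be bucket-sorted globally in $O(n)$ time once during preprocessing of the tree; alternatively, the linear-time suffix tree construction of \cite{Farach-Colton2000suffixtree} can be made to deliver the children already ordered. With ordered adjacency lists in hand, the DFS runs in time linear in the size of $\STree(T)$ plus the number $m$ of marked nodes, i.e.\ $O(n+m)$. Summing the $O(n+m)$ cost of Lemma~\ref{lem:make_explicit_node}, the $O(n)$ cost of ordering the children, and the $O(n+m)$ cost of the traversal gives the claimed $O(n+m)$ total time.

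I expect the main obstacle to be purely this last bookkeeping issue — ensuring the children-ordering does not introduce a $\log$ factor — rather than anything conceptually deep; the core reduction to a lexicographic DFS of the suffix tree is essentially immediate once Lemma~\ref{lem:make_explicit_node} is available.
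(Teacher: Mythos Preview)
Your proposal is correct and is precisely the natural elaboration of what the paper leaves implicit: the paper states the corollary as an immediate consequence of Lemma~\ref{lem:make_explicit_node} without spelling out a proof, and your argument---make the loci explicit, then read off the substrings via a lexicographically ordered DFS of the augmented suffix tree, with the children ordered in $O(n)$ time by a global bucket sort over the integer alphabet---is exactly the intended derivation. Nothing to add.
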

\subsubsection*{LCE Queries.}
An \emph{LCE query} on a string $T$ is, given two indices $i, j$ of $T$, to compute $\lce_T(i, j)$.
Using $\STree(T\$)$ enhanced with a lowest common ancestor data structure,
we can answer any LCE query on $T$ in constant time
where $\$$ is a special character with $\$ \not\in \Sigma$.
In the same way, we can compute the lcp value between any two suffixes of $T$ or $T^R$ in constant time
by using $\STree(T\$T^R\#)$ where $\#$ is another special character with $\# \not\in \Sigma$.
\subsubsection*{NCA Queries.}
A \emph{nearest colored ancestor query}~(NCA query) on a tree $\mathcal{T}$ with \emph{colored} nodes is,
given a query node $v$ and a color $C$, to compute the nearest ancestor $u$ of $v$ such that the color of $u$ is $C$.
Noticing that the notion of NCA is a generalization of well-known \emph{nearest marked ancestor}.
For NCA queries, we will use the following known results:
\begin{lemma}[\cite{Gawrychowski2018NCA}] \label{lem:nca_general}
  Given a tree $\mathcal{T}$ with colored nodes,
  a data structure of size $O(N)$ can be constructed in deterministic $O(N\log\log N)$ time or expected $O(N)$ time
  to answer any NCA query in $O(\log\log N)$ time, where $N$ is the number of nodes of $\mathcal{T}$.
\end{lemma}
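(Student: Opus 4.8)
The plan is to reduce each color class to a laminar family of intervals and answer queries by predecessor search. First I would fix an Euler tour (DFS discovery/finish) numbering of $\mathcal{T}$, assigning to each node $u$ an interval $I_u = [\mathrm{in}(u), \mathrm{out}(u)] \subseteq [1,N]$ such that $u$ is an ancestor of $v$ if and only if $I_v \subseteq I_u$, equivalently $\mathrm{in}(v) \in I_u$. For a fixed color $C$, the intervals $\{ I_u : \mathrm{color}(u) = C \}$ are pairwise nested-or-disjoint because $\mathcal{T}$ is a tree, i.e., they form a laminar family; hence the nearest ancestor of $v$ of color $C$ is exactly the node whose interval is the \emph{smallest} (by containment) member of this family containing the point $\mathrm{in}(v)$, and if there is none then $v$ has no color-$C$ ancestor. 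So it suffices to preprocess each color so as to report, for a query point, the smallest enclosing interval of a static laminar family.

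For this I would use one predecessor structure per color. Sorting the $O(|V_C|)$ interval endpoints of color $C$ (where $V_C$ is the set of color-$C$ nodes) partitions the line into that many slabs, and a single left-to-right sweep maintaining a stack of currently open intervals labels each slab with the deepest color-$C$ interval covering it (the current stack top). Storing the slab boundaries in a linear-space predecessor structure with $O(\loglog N)$ query time (a $y$-fast trie, say) over the universe $[1,N]$, a query $(v,C)$ becomes one predecessor search on $\mathrm{in}(v)$ in color $C$'s structure followed by reading off the node that labels the resulting slab. To make the whole preprocessing fit the stated bounds I would use that $\sum_C |V_C| = N$ since every node has exactly one color: a single radix sort of all endpoints tagged by color yields every color's sorted endpoint list in $O(N)$ total time, the sweeps cost $O(N)$ in aggregate, and building the predecessor structures costs $O(N)$ expected time via hashing or $O(N\loglog N)$ deterministic time via deterministic static dictionaries; alternatively one may concatenate all the endpoint lists into a single predecessor structure over a polynomially larger universe, which still answers in $O(\loglog N)$ time since $\sigma = N^{O(1)}$.

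The step needing the most care will be the preprocessing bookkeeping: checking that the per-color sorting, sweeping, and dictionary building really telescope to $O(N)$ (resp.\ $O(N\loglog N)$) rather than $O(N\log N)$, which relies on the global radix sort and on $\sum_C|V_C|=N$, together with the degenerate cases (giving the root a sentinel color so that ``no colored ancestor'' is detected, and fixing the open/half-open convention on slabs to decide whether a query counts $v$ itself as an ancestor). It is also worth noting that $O(\loglog N)$ query time with $O(N)$ space is essentially optimal for this line of attack, since nearest colored ancestor restricted to a rooted path graph is precisely predecessor search.
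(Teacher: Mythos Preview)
The paper does not prove this lemma; it is quoted as a black-box result from Gawrychowski, Landau, Mozes, and Weimann~(2018), so there is no in-paper argument to compare against. Your reconstruction is correct and is essentially the approach of the cited work: reduce via the Euler tour to reporting the smallest enclosing interval from a laminar family, then answer that with a static predecessor structure over the endpoint slabs. Your closing remark that on a rooted path the problem \emph{is} predecessor search, making $O(\loglog N)$ tight for this line of attack, is also exactly the lower-bound reduction in that paper.

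Two minor points worth tightening. First, for the deterministic $O(N\loglog N)$ construction you appeal to ``deterministic static dictionaries''; this is the right idea, but you should name the concrete result you rely on (e.g., Ru\v{z}i\'{c}'s deterministic dictionaries), since the bound does not fall out of a vanilla $y$-fast trie with hashing. Second, your telescoping $\sum_C |V_C| = N$ assumes one color per node; the present paper later needs multi-colored nodes (see its footnote when the lemma is applied) but handles that by node duplication, so the single-color assumption is fine for the lemma as stated.
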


\begin{lemma}[\cite{Bille2015GrammerCompressedString,Charalampopoulos2021Internal}] \label{lem:nca_log_colors}
  If the number of colors is $O(\log N)$,
  a data structure of size $O(N)$ can be constructed in $O(N)$ time
  to answer any NCA query in $O(1)$ time.
\end{lemma}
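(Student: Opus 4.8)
The plan is to reduce the statement to its equivalent ``deepest colored ancestor'' form and then deploy the standard micro/macro-tree (method-of-four-Russians) machinery, arranging things so that the bound $c = O(\log N)$ on the number of colors is absorbed by word-level parallelism. Since, among the ancestors of $v$, the one of color $C$ that is \emph{nearest} is exactly the one that is \emph{deepest}, it suffices to answer the deepest-colored-ancestor query. As a cheap first ingredient I would precompute, for every node $v$, the $c$-bit word $B(v)$ recording which colors occur on the root-to-$v$ path; a single top-down sweep using $B(v) = B(\mathrm{parent}(v)) \cup \{\mathrm{col}(v)\}$ does this in $O(N)$ time and $O(N)$ words, and lets us dispose in $O(1)$ time of every query with $C \notin B(v)$ (answer: none). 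I would also equip the tree (and its Euler tour) with level-ancestor and $\mathrm{lca}$ structures, all built in $O(N)$ time with $O(1)$ queries.

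Next I would invoke the usual decomposition of the tree into $O(N/k)$ connected micro-trees of $k = \Theta(\log N / \loglog N)$ nodes each (replacing each high-degree node by a small binary gadget first, so the decomposition is well behaved), together with the macro-tree obtained by contracting every micro-tree to its topmost node. A query $\mathrm{NCA}(v, C)$ with $v$ in micro-tree $M$ then splits into two cases. If the sought ancestor lies inside $M$, it is read off a precomputed table indexed by the shape of $M$, its coloring, the query node, and the query color: the number of shapes on $\le k$ nodes is $N^{o(1)}$ and the number of colorings is $(\log N)^{k}$, which is $o(N)$ for a sufficiently small constant hidden in $k$, so the table has $o(N)$ entries, is filled in $o(N)$ time, and is consulted in $O(1)$ time. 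Otherwise the answer is the deepest color-$C$ node among the proper ancestors of the top node of $M$, and this is a query on the macro-tree; here one maintains, exploiting that a whole color set fits in a single machine word, just enough word-packed auxiliary information (plus a second four-Russians table that, for the eventually-located macro micro-tree, returns the deepest color-$C$ node on its relevant root-segment) to pinpoint that ancestor in $O(1)$ time and $O(N)$ space.

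The main obstacle is precisely this last component. The naive ``deepest ancestor of each color'' table has size $\Theta(Nc) = \Theta(N\log N)$ words, and the generic substitutes — fully persistent arrays over the tree-shaped version history, or predecessor-style structures on the set of occurrences of each color — only deliver $O(\loglog N)$-time access in $O(N)$ space, not the required $O(1)$; a straightforward recursive micro/macro layering also stumbles, because macro-nodes carry full $c$-bit color masks and so cannot be tabulated. Getting down to $O(1)$ needs the careful interleaving of the micro/macro decomposition with word-packed tables performed in the cited works of Bille et al.\ and Charalampopoulos et al.; I would import their construction for that part, everything else above being standard.

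A secondary viewpoint I would keep at hand, as a sanity check and because it sometimes makes the few-colors/$O(1)$ structure more transparent, is the Euler-tour reformulation: writing $[\mathrm{in}(u),\mathrm{out}(u)]$ for the tour interval of a node $u$, the answer to $\mathrm{NCA}(v,C)$ is the innermost color-$C$ interval that stabs the point $\mathrm{in}(v)$ within the laminar family of all node intervals, so the whole task is colored laminar-interval stabbing with $O(\log N)$ colors. I expect the final write-up to be short, essentially citing the two references for the heavy component and supplying only the reduction, the $B(v)$ precomputation, and the micro-tree tabulation explicitly.
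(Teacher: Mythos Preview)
The paper does not prove this lemma at all: it is stated with a citation to \cite{Bille2015GrammerCompressedString,Charalampopoulos2021Internal} and used as a black box, so there is no ``paper's own proof'' to compare against. Your proposal is therefore not competing with anything in the text.

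As for the content of your sketch, it is a reasonable high-level plan, and the parts you spell out (the bitmask $B(v)$, the micro/macro decomposition, the four-Russians tabulation for the in-micro-tree case with $k=\varepsilon\log N/\loglog N$ so that $c^{k}=N^{\varepsilon}$) are sound. But you yourself identify the real difficulty --- handling the macro-tree query in $O(1)$ time and $O(N)$ space when $c=\Theta(\log N)$ --- and resolve it by writing ``I would import their construction.'' That is exactly what the paper does too: it imports the result. So in effect your proposal and the paper coincide: both cite \cite{Bille2015GrammerCompressedString,Charalampopoulos2021Internal} for the substantive step. If you intend this as a self-contained proof rather than a pointer, the macro-tree component still needs to be written out; otherwise, a one-line citation (as the paper has) is the appropriate level of detail here.
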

\subsubsection*{Eertrees.}
The \emph{eertree}~(a.k.a. palindromic tree) of $T$ is a pair of rooted edge-labeled trees $\Todd$ and $\Teven$
representing all distinct palindromes in $T$~\cite{Rubinchik2018eertree}.
The roots of $\Todd$ and $\Teven$ represent $\varepsilon$.
Each non-root node of $\Todd$~(resp. $\Teven$) represents an odd-palindrome~(resp. even-palindrome) which occurs in $T$.
Let $\pal(v)$ be the palindrome represented by a node $v$.
For the root $r_{\mathsf{odd}}$ of $\Todd$,
there is an edge $(r_{\mathsf{odd}}, u)$ labeled by $a \in \Sigma$ if there is a node $u$ with $\pal(u) = a$.
For any node $v$ in the eertree except for $r_{\mathsf{odd}}$,
there is an edge $(v, w)$ labeled by $a \in \Sigma$ if there is a node $w$ with $\pal(w) = a\cdot\pal(v)\cdot a$.
We denote by $\EERTREE(T)$ the eertree of $T$.
We will sometimes identify a node $u$ in $\EERTREE(T)$ with its corresponding palindrome $\pal(u)$.
Also, the path from a node $u$ to a node $v$ in $\EERTREE(T)$ is denoted by $\pal(u) \rightsquigarrow \pal(v)$.
If a given string $T$ is over an integer alphabet of size $n^{O(1)}$, $\EERTREE(T)$ can be constructed in $O(n)$ time~\cite{Rubinchik2018eertree}.
\subsubsection*{Path-Tree LCE Queries.}
A \emph{path-tree LCE query} is a generalized LCE query on a rooted edge-labeled tree $\mathcal{T}$~\cite{Bille2016LCE}:
Given three nodes $u$, $v$, and $w$ in $\mathcal{T}$ where $u$ is an ancestor of $v$,
to compute the lcp between the path-string from $u$ to $v$ and any path-string from $w$ to a descendant leaf.
The following result is known:
\begin{theorem}[Theorem 2 of \cite{Bille2016LCE}] \label{thm:pathtreeLCE}
  For a tree $\mathcal{T}$ with $N$ nodes,
  a data structure of size $O(N)$ can be constructed in $O(N)$ time
  to answer any path–tree LCE query in $O((\log\log N)^2)$ time.
\end{theorem}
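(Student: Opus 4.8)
Since this is a result of Bille et al.~\cite{Bille2016LCE}, the task is to reconstruct their argument; the plan is to first reduce the query to a clean navigation problem on $\mathcal{T}$ and then layer fast primitives on top. During preprocessing I would equip $\mathcal{T}$ with an LCA structure and a level-ancestor structure, so that given a node and a target depth its ancestor at that depth is returned in $O(1)$ time. Because any downward path from $v$ to a descendant leaf extends $\str(u,v)$, we may assume $v$ is a leaf; then $P := \str(u,v)$ is the suffix $S[\mathrm{depth}(u)+1\,..\,\mathrm{depth}(v)]$ of the root-to-leaf string $S=\str(\mathrm{root},v)$. A path-tree LCE query now amounts to: find the deepest descendant $x$ of $w$ such that $\str(w,x)$ is a prefix of $P$, and return $\min\{|P|,\ \mathrm{depth}(x)-\mathrm{depth}(w)\}$. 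The essential difficulty compared to ordinary (string) LCE is that below $w$ the tree branches, so whenever the current candidate node has several children we must decide which child edge, if any, continues to agree with $P$, rather than simply reading the next character.

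Next I would install a primitive for comparing downward tree substrings in constant time: precompute Karp--Rabin-style fingerprints of every root-to-node prefix string, made collision-free against the fixed tree by the standard deterministic choice of modulus, so that the fingerprint of any $\str(a,b)$ with $a$ an ancestor of $b$ is available in $O(1)$ time. This lets us test, for a node $z$ and a length $m$, whether reading $m$ characters down from $z$ spells the corresponding window $P[\,\cdot\,..\,\cdot\,]$ of $P$, in $O(1)$ time. To cope with branching I would use a heavy-path decomposition: descending from $w$ we traverse a sequence of heavy paths, and within a single heavy path the candidate node is an ancestor of the path's bottom, so ``how far down this heavy path still agrees with $P$'' is resolved by one predecessor-type search over precomputed checkpoints on the path in $O(\log\log N)$ time; at the resulting branching node we use the fingerprint of the next character of $P$ and a child lookup to select the light edge into the next heavy path. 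Since a root-to-leaf path meets $O(\log N)$ heavy paths, the straightforward version costs $O(\log N\cdot\log\log N)$; to bring this down I would additionally impose a recursive micro/macro (ART-style cluster) decomposition with micro trees of size $\Theta(\log N)$, recurse the entire structure on the macro tree of $O(N/\log N)$ nodes, and precompute small per-micro-tree tables so that an entire descent through one micro tree is answered by a single predecessor query and only $O(\log\log N)$ macro-level steps remain --- yielding $O((\log\log N)^2)$ query time, $O(N)$ space, and $O(N)$ construction after combining the partial answers from the macro level and the two bounding micro trees.

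The main obstacle is exactly this last quantitative point: pushing the query time from the ``obvious'' $O(\log N)$ (one predecessor search per heavy path crossed) down to $O((\log\log N)^2)$. This needs (i) the recursive cluster decomposition so that only a doubly-logarithmic number of ``levels'' of heavy-path crossings are ever touched, (ii) succinct precomputed data per micro tree so that a whole micro-tree descent collapses to one predecessor query, and (iii) ensuring the fingerprint comparisons stay collision-free and $O(1)$ through the recursion, plus the bookkeeping to stitch together the macro-level answer with the answers from the entry and exit micro trees. A further minor subtlety is that in a general edge-labeled tree a node may have several child edges sharing a label, so the ``matching downward path'' need not be unique; this is handled by observing that for maximizing the agreement length it suffices, at each step, to follow any child whose edge label equals the current character of $P$, and that the fingerprints detect precisely when no such child exists.
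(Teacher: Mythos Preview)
The paper does not prove this theorem at all: it is stated purely as a citation of Theorem~2 of Bille et al.~\cite{Bille2016LCE} and used as a black box (see the paragraph on Path--Tree LCE Queries in Section~\ref{sec:preliminaries}). There is consequently no ``paper's own proof'' against which your proposal can be compared; your sketch is an attempt to reconstruct the argument of the cited work, not of the present paper.

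That said, a brief remark on the sketch itself. The overall shape---heavy-path decomposition, predecessor search along heavy paths, and a second level of decomposition to collapse the $O(\log N)$ heavy-path crossings into $O(\log\log N)$ macro steps---is indeed the kind of machinery Bille et al.\ deploy. The weakest link in your outline is the appeal to ``Karp--Rabin-style fingerprints \ldots\ made collision-free against the fixed tree by the standard deterministic choice of modulus'' with $O(1)$ evaluation and $O(N)$ deterministic construction; obtaining deterministic, worst-case $O(1)$ substring equality tests in $O(N)$ preprocessing is not a routine step, and the original paper handles the comparison primitive more carefully than a single sentence suggests. If you intend to present this as a self-contained proof rather than a pointer to \cite{Bille2016LCE}, that step would need to be made precise.
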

We will use later path-tree LCE queries on the eertree of the input string.
\subsubsection*{Stabbing Queries.}
Let $\mathcal{I}$ be a set of $n$ intervals, each of which is a subinterval of the universe $U = [1, O(n)]$.
An \emph{interval stabbing query} on $\mathcal{I}$ is, given a query point $q \in U$,
to report all intervals $I \in \mathcal{I}$ such that $I$ is \emph{stabbed} by $q$, i.e., $q \in I$.
We can answer such a query in $O(1+k)$ time after $O(n)$-time preprocessing,
where $k$ is the number of intervals to report~\cite{Schmidt2009stabbing}.
 \section{Changes of MUPSs After Single Character Substitution} \label{sec:num_of_changes}
In the following, we fix
the original string $T$ of length $n$,
the text position $i$ in $T$ to be substituted, and
the string $T'$ after the substitution.
Namely, $T[i] \neq T'[i]$ and $T[j] = T'[j]$ for each $j$ with $1 \le j \le n$ and $j \neq i$.
This section analyzes the changes of the set of MUPSs when $T[i]$ is substituted by $T'[i]$.
For palindromes covering editing position $i$, Lemma~\ref{lem:same_pal_cannot_cover_i_after_edit} holds.
\begin{lemma}\label{lem:same_pal_cannot_cover_i_after_edit}
  For a palindrome $w$, if $\inbeg_{T, i}(w) \neq \emptyset$, then $\inbeg_{T', i}(w) = \emptyset$.
\end{lemma}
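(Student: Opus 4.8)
The plan is to argue by a direct contradiction based on the structure of palindromes that cover the edited position. Suppose $w$ is a palindrome with $\inbeg_{T,i}(w) \neq \emptyset$, so there is an occurrence $T[b..e] = w$ with $b \le i \le e$; suppose for contradiction that also $\inbeg_{T',i}(w) \neq \emptyset$, so there is an occurrence $T'[b'..e'] = w$ with $b' \le i \le e'$. The key observation is that $w$ is a palindrome, so within $w$ the character at offset $k$ from the left equals the character at offset $k$ from the right. Under the occurrence $T[b..e]$, position $i$ sits at some offset, and by palindromic symmetry the mirror position $i^\ast = b + e - i$ also lies in $[b,e]$ and satisfies $T[i^\ast] = T[i]$ — unless $i$ is the center of $w$ (possible only when $w$ is an odd-palindrome), in which case $i^\ast = i$.

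First I would handle the generic case where $i$ is not the center of the occurrence $T[b..e]$. Then $i^\ast = b+e-i \neq i$ and $i^\ast \in [b,e] \subseteq [1,n]$, and $T[i^\ast] = T[i]$. Since only position $i$ changed, $T'[i^\ast] = T[i^\ast] = T[i] \neq T'[i]$. Now I would look at the second occurrence $T'[b'..e'] = w$. Position $i$ lies in $[b',e']$; its mirror is $j^\ast = b'+e'-i \in [b',e']$, and by palindromic symmetry of $w$ we need $T'[j^\ast] = T'[i]$. The point is to derive a contradiction by comparing offsets: in the first occurrence the offset of $i$ from the left end is $i-b$, and in the second it is $i-b'$. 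Because both occurrences spell the same string $w$, the character at a given offset is fixed; but the first occurrence forces that offset's mirror character (in $T$, equivalently in $T'$ away from $i$) to equal $T[i]$, while the second occurrence forces position $i$ itself to sit at some offset whose mirror character equals $T'[i]\neq T[i]$. Concretely, I would show that $j^\ast \neq i$ and $j^\ast \in [1,n]\setminus\{i\}$ so $T'[j^\ast] = T[j^\ast]$, and then track what $T[j^\ast]$ must be using the first occurrence — it corresponds to the same offset-from-center structure, forcing $T[j^\ast]=T[i]$, contradicting $T'[j^\ast]=T'[i]\neq T[i]$.

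The remaining case is when $i$ is the center of the occurrence $T[b..e]$, which requires $w$ to be an odd-palindrome with center exactly at $i$. Here the symmetry argument inside $T$ gives nothing, so instead I would use the second occurrence $T'[b'..e']=w$: since $w = T[b..e]$ as a string and its single "central" character is $T[i]$, the second occurrence must also have $T'[i']=T[i]$ at its center $i' = (b'+e')/2$. If $i' = i$ then $T'[i] = T[i]$, contradicting the substitution. If $i'\neq i$, then $i'\in[1,n]\setminus\{i\}$ so $T'[i']=T[i']$, and since $i\in[b',e']$ with $i\neq i'$, position $i$ has a well-defined nonzero offset from the center $i'$; its mirror $j' = 2i'-i$ lies in $[b',e']$, is distinct from $i$, and by palindromic symmetry $T'[j']=T'[i]$. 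But $j'\neq i$ means $T'[j']=T[j']$, and the first occurrence (where $i$ is the center) tells us $T[j']$ equals the character of $w$ at offset $|i-j'|$ from center $i$, which — reading off from $T$ — is $T[i-(j'-i)]$ or $T[i+(i-j')]$, a position $\neq i$, hence unchanged, hence equal in $T$ and $T'$; chasing these equalities yields $T'[i]=T[i]$ again, a contradiction.

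The main obstacle I anticipate is bookkeeping the offsets carefully so that the two occurrences (one in $T$, one in $T'$) are compared correctly: the string $w$ is the same in both, so any offset inside $w$ has a fixed character, and I must make sure that in every sub-case I can exhibit a position $\neq i$ whose value is pinned to $T[i]$ by one occurrence and to $T'[i]$ by the other, using that such a position's character is the same in $T$ and $T'$. A clean way to organize this is to note that $w$ being a palindrome means that if $w$ occurs at $[b,e]$ covering $i$, then $w$ "contains" the information $w[\centerpal\text{-offset of }i] = T[i]$ (for the occurrence in $T$) and simultaneously $= T'[i]$ (for the occurrence in $T'$) whenever the offsets happen to coincide; when they do not coincide the mirror trick produces the needed distinct position. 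Once the offset arithmetic is pinned down, the contradiction $T[i] = T'[i]$ is immediate in every case.
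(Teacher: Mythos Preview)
Your overall strategy---locate a single position whose character is pinned to $T[i]$ by one occurrence and to $T'[i]$ by the other, then invoke $T[i]\neq T'[i]$---is exactly the paper's, and it does work. The gap is in the particular position you commit to. In the generic case you take $j^\ast = b'+e'-i$, the mirror of $i$ inside the $T'$-occurrence, and then assert that ``tracking what $T[j^\ast]$ must be using the first occurrence \ldots\ forc[es] $T[j^\ast]=T[i]$.'' But the first occurrence $T[b..e]$ says nothing about $T[j^\ast]$ unless $j^\ast\in[b,e]$, and even when it does, the offset of $j^\ast$ inside $w$ relative to $[b,e]$ is $j^\ast-b+1$, which equals neither $k_1:=i-b+1$ nor its mirror $m+1-k_1$ (that would force $c'=i$ or $c'=c$), so there is no reason $w$ should carry the character $T[i]$ at that offset. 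The same issue recurs in your center case: you read $T[j']$ off the first occurrence at distance $|i-j'|=2|i'-i|$ from the center $i$, but this distance can exceed the arm length, so $j'\notin[b,e]$ is entirely possible and the step fails.

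The position that actually works---and this is what the paper's argument boils down to once you unwind its case split on whether $i\in[c,c']$---is $p=e+b'-i=b+e'-i$ (equivalently $c+c'-i$). With $k_1=i-b+1$ and $k_2=i-b'+1$ one has $w[k_1]=T[i]$ and $w[k_2]=T'[i]$, hence by palindromicity $w[m+1-k_1]=T[i]$ and $w[m+1-k_2]=T'[i]$. Reading $w[m+1-k_1]$ from the \emph{second} occurrence gives $T'[p]=T[i]$; reading $w[m+1-k_2]$ from the \emph{first} occurrence gives $T[p]=T'[i]$; and a one-line check shows both readings land at the \emph{same} position $p$. If $p=i$ you are done; otherwise $T[p]=T'[p]$ and $T[i]=T'[i]$ follows. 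The missing idea in your sketch is precisely this cross-over: evaluate the mirror-of-$i$ offset coming from one occurrence \emph{inside the other} occurrence, rather than mirroring within each occurrence separately.
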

\begin{proof}
For the sake of contradiction, we assume that there is a palindrome $w$ with $\inbeg_{T, i}(w) \ne \emptyset$ and $\inbeg_{T', i}(w) \ne \emptyset$.
Let $c$~(resp. $c'$) be the center of an occurrence of $w$ in $T$~(resp. in $T'$) covering position $i$.
It is clear that $c \neq c'$ since $T[i]$ is substituted by another character $T'[i]$.
Also, it suffices to consider when $c < c'$ from the symmetry of $T$ and $T'$.
Let $d$~(resp. $d'$) be the distance between $c$ and $i$~(resp. $c'$ and $i$), i.e., $d = |i-c|$ and $d' = |i-c'|$.

There are the following two cases:
either (1)~$i \not\in [c,c']$ or (2)~$i \in [c,c']$.
See also Fig.~\ref{fig:same_pal} for illustration.
(1)~Now we consider the case when $c' < i$. Another case~($i < c$) can be treated similarly.
On the one hand, since $c$ and $c'$ are the centers of $w$, $T'[c'+d] = T[c+d] = T[i]$.
Further, $T'[c'-d] = T[c'+d]$ by palindromic symmetry, and hence, $T'[c'-d] = T[i]$.
On the other hand, again, since $c$ and $c'$ are the centers of $w$, $T[c-d'] = T'[c'-d']$.
Further, $T'[c'-d'] = T'[c'+d'] = T'[i]$ by palindromic symmetry, and hence, $T[c-d'] = T'[i]$.
Also, $T'[c-d'] = T[c-d'] = T'[i]$ since $c-d' \neq i$.
Since $c'-d = c-d'$ holds in this case, $T[i] = T'[c'-d] = T'[c-d'] = T'[i]$, a contradiction.

(2)~Similar to the first case, it can be seen that $T'[c'-d] = T[c-d] = T[i]$.
If $d = d'$, then $T[i] = T'[c'-d] = T'[c'-d'] = T'[i]$, a contradiction.
Hence $d \neq d'$ holds, and thus, $T'[c+d'] = T[c+d']$.
Also, $T[c+d'] = T'[c'+d'] = T'[c'-d'] = T'[i]$ holds.
Finally, since $c'-d = c+d'$ holds in this case, $T[i] = T'[c'-d] = T'[c+d'] = T'[i]$, a contradiction.
\end{proof}
\begin{figure}[tb]
\centerline{
  \includegraphics[scale=0.25]{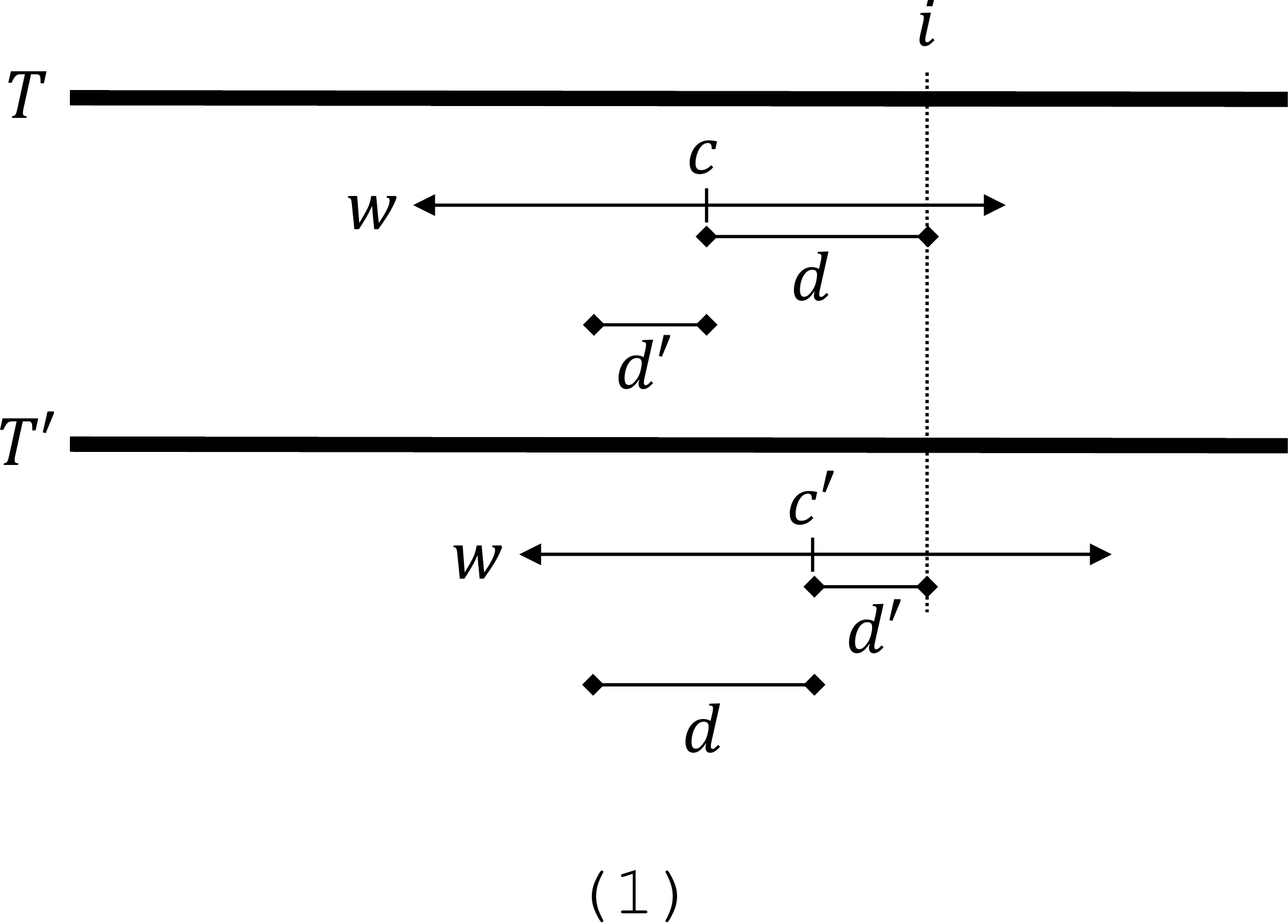}
  \hfill
  \includegraphics[scale=0.25]{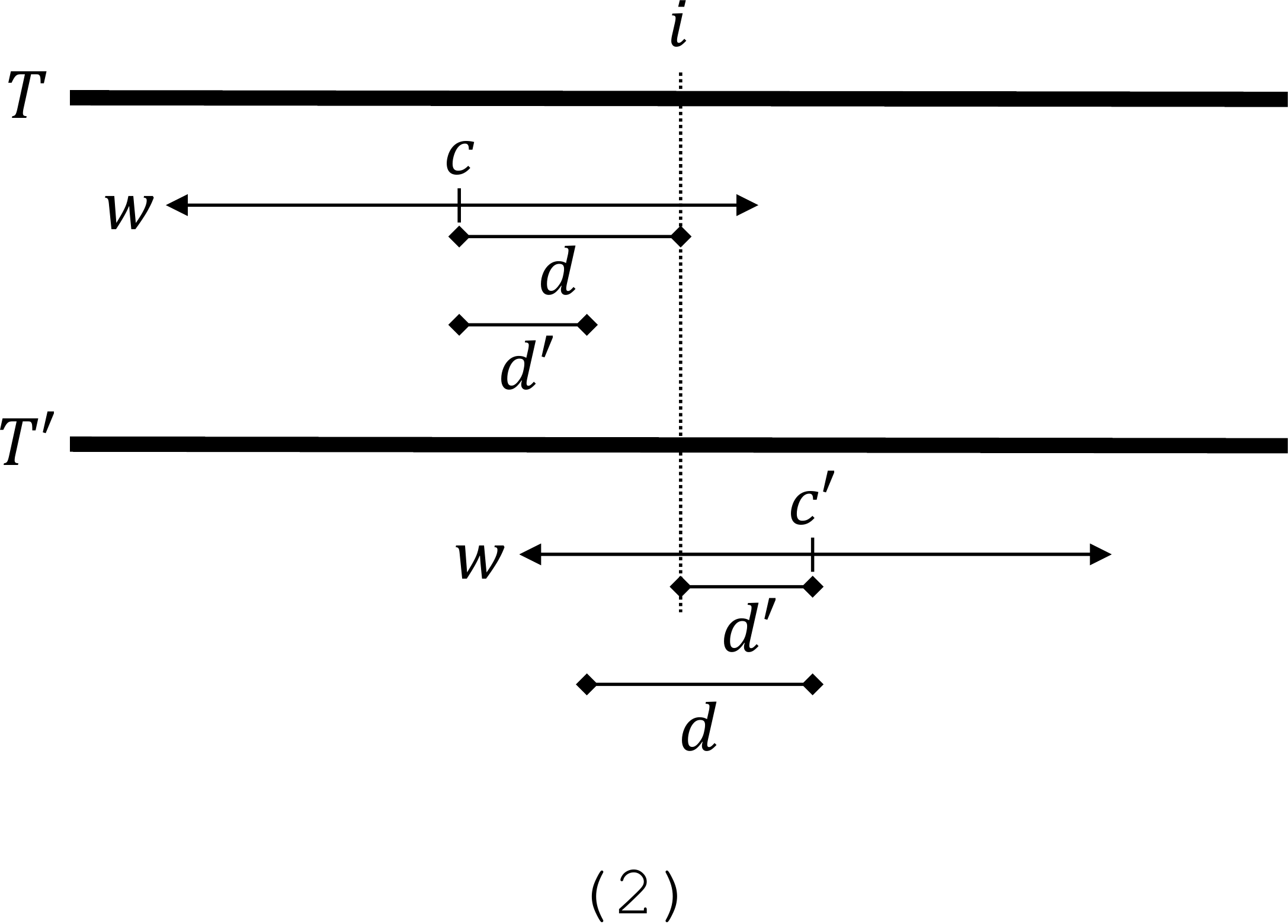}
}
\caption{Illustration for the two cases of Lemma~\ref{lem:same_pal_cannot_cover_i_after_edit}.
Note that this illustration is for the sake of contradiction.}
\label{fig:same_pal}
\end{figure}
For a position $i$, let $\mathcal{W}_i$ be the set of palindromes $w$ such that
$|\inbeg_{T, i}(w)| \ge 1$, $|\xbeg_{T, i}(w)|= 1$, and $w$ is minimal, i.e.,
$|\inbeg_{T, i}(v)| = 0$ or $|\xbeg_{T, i}(v)| \ge 2$ where $v = w[2..|w|-1]$.
This set $\mathcal{W}_i$ is useful for analyzing the number of changes of MUPSs in the proof of Theorem~\ref{thm:num_of_removed_MUPS}.
\begin{lemma} \label{lem:size_of_W}
  For any position $i$ in $T$, $|\mathcal{W}_i| \in O(\log n)$.
\end{lemma}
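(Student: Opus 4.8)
The set $\mathcal{W}_i$ consists of palindromes $w$ that occur exactly once through position $i$ (covering $i$ at least once) and exactly once elsewhere, and that are minimal with respect to this property. My plan is to bound $|\mathcal{W}_i|$ by charging each $w \in \mathcal{W}_i$ to one of the $O(\log n)$ arithmetic progressions that describe the palindromic suffixes (and, symmetrically, prefixes) of prefixes/suffixes of $T$, using Corollary~\ref{cor:group_of_palsuf}. First I would observe that every $w \in \mathcal{W}_i$ has an occurrence covering $i$; fix one such occurrence $T[b..e]$ with $b \le i \le e$. Then $w$ is a palindromic substring of $T$ centered at $c = \centerpal(w)$. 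The key structural fact to extract is that, since $w$ covers $i$, either the left arm or the right arm of the occurrence $T[b..e]$ reaches position $i$ — more precisely, $w$ (centered at $c$) is an expansion of the maximal palindrome of $T$ with the same center, and the "radius" of $w$ is at least $|i - c|$. So each $w\in\mathcal{W}_i$ is naturally associated with a maximal palindrome of $T$ centered at some $c$ such that the maximal palindrome there also covers $i$; there are at most $2n$ such centers a priori, but the minimality condition is what will cut this down to $O(\log n)$.

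Next I would group the candidate centers. Consider the palindromic suffixes of $T[1..i]$: by Corollary~\ref{cor:group_of_palsuf} they fall into $O(\log n)$ groups $G_1,\dots,G_{m_i}$ by smallest period, and within each group the maximal palindromes that are expansions of members of $G_k$ all share the period $p_k$ except at most one, and (item~\ref{item:longest_candidate}) every such maximal palindrome is contained in the longest or second longest one. A palindrome $w$ centered to the left of $i$ that covers $i$ must, when restricted to $T[1..i]$, end exactly at position $i$ — i.e., its suffix ending at $i$ is a palindromic suffix of $T[1..i]$ — so $c$ lies in one of these $O(\log n)$ groups (and symmetrically for centers to the right of $i$, using palindromic prefixes of $T[i..n]$). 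So it suffices to show that each group $G_k$ contributes only $O(1)$ palindromes to $\mathcal{W}_i$. Within a group sharing period $p_k$, the occurrences covering $i$ form a periodic run; a palindrome $w\in\mathcal{W}_i$ that has a center in $G_k$ and whose contraction $w[2..|w|-1]$ either does not cover $i$ or has at least two occurrences away from $i$ — I want to argue that at most a constant number of choices of $c$ in $G_k$ and of radius can satisfy all three of: $|\inbeg_{T,i}(w)|\ge 1$, $|\xbeg_{T,i}(w)|=1$, and minimality. The periodicity (item~\ref{item:period_of_mpal}) forces that once we are inside a periodic run of palindromes of period $p_k$, shrinking $w$ by $p_k$ keeps it a palindrome covering $i$, and its number of outside-occurrences is monotone under shrinking in a controlled way; so the "boundary" cases where the counts flip are $O(1)$ per group. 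The one exceptional maximal palindrome per group and the two-longest clause in item~\ref{item:longest_candidate} contribute only an additive constant.

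Finally, I would combine: $O(\log n)$ groups from palindromic suffixes of $T[1..i]$, $O(\log n)$ groups from palindromic prefixes of $T[i..n]$ (for centers right of $i$), plus the $O(1)$ palindromes centered exactly at $i$ or with center within distance $O(1)$ handled directly, and each group contributes $O(1)$ members of $\mathcal{W}_i$, giving $|\mathcal{W}_i| \in O(\log n)$. The main obstacle I anticipate is the per-group $O(1)$ bound: I need a clean monotonicity statement saying that, along a periodic chain of palindromes centered in $G_k$ and covering $i$, the quantity $|\xbeg_{T,i}(\cdot)|$ behaves monotonically (or changes sign $O(1)$ times) as the palindrome length decreases by $p_k$, and simultaneously that the minimality condition — phrased in terms of the contraction $w[2..|w|-1]$, which may fall into a different group — does not let more than $O(1)$ survivors slip through. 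Handling the interaction between a palindrome's group and its contraction's group (which can differ only in the boundary situations of Lemma~\ref{lem:palindromic_suffixes}(A),(B)) is the delicate bookkeeping I expect to spend the most care on; Lemma~\ref{lem:same_pal_cannot_cover_i_after_edit} and Lemma~\ref{lem:different_occ} should help control the "outside" occurrences and rule out pathological overlaps.
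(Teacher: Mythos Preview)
Your high-level decomposition is exactly the paper's: split $\mathcal{W}_i$ by whether the center (of an occurrence covering $i$) lies to the left or right of $i$; note that every $w\in\mathcal{W}_i^L$ is an expansion of some palindromic suffix of $T[1..i]$; partition those suffixes into the $O(\log n)$ periodic groups $G_1,\dots,G_{m_i}$ from Corollary~\ref{cor:group_of_palsuf}; and show each group contributes $O(1)$ elements to $\mathcal{W}_i^L$. The place where your plan diverges --- and where it currently has a gap --- is the per-group bound.

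You propose to argue via monotonicity of $|\xbeg_{T,i}(\cdot)|$ ``along a periodic chain,'' and to analyse how a palindrome's group relates to its one-step contraction's group. But your description conflates two axes (moving the center within $G_k$ versus shrinking the length at a fixed center), and monotonicity in length only tells you that each \emph{center} contributes at most one minimal element; since a group may contain many centers, this does not yet yield $O(1)$ per group. The paper sidesteps all of this. Its key observation --- which you never isolate --- is that minimality forbids any two distinct members of $\mathcal{W}_i$ from being string-contractions of one another. The per-group argument is then purely structural: assuming $|H_k|\ge 4$, items~(\ref{item:diff_of_centers}) and~(\ref{item:period_of_mpal}) of Corollary~\ref{cor:group_of_palsuf} force at least three members of $H_k$ to share the smallest period $p_k$, and a pigeonhole on centers modulo $p_k$ gives two whose covering-$i$ occurrences have centers differing by an integer multiple of $p_k$. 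A direct computation with that shared period shows the extended right arm of the shorter is a proper prefix of the extended right arm of the longer, so one is a string-contraction of the other --- contradiction. This yields $|H_k|\le 3$ with no monotonicity bookkeeping and no need to look at the contraction's group. Item~(\ref{item:longest_candidate}), which you invoke, is not used here (it is used in Lemma~\ref{lem:num_of_stabbed_MUPS} instead), and neither Lemma~\ref{lem:same_pal_cannot_cover_i_after_edit} nor Lemma~\ref{lem:different_occ} plays any role in this proof.
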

\begin{proof}
  First, by minimality of palindromes in $\mathcal{W}_i$, centers of palindromes in $\mathcal{W}_i$ are different from each other.
  Let $\mathcal{W}_i^L \subset \mathcal{W}_i$~(resp. $\mathcal{W}_i^R\subset \mathcal{W}_i$) be the set of palindromes in $\mathcal{W}_i$ whose center is at most $i$~(resp. at least $i$).

  Let us consider the size of $\mathcal{W}_i^L$.
  Every palindrome in $\mathcal{W}_i^L$ is an expansion of some palindromic suffix of $T[1.. i]$.
From Corollary~\ref{cor:group_of_palsuf},
  the set of palindromic suffixes of $T[1.. i]$
  can be divided into $m_i \in O(\log i)$ groups w.r.t. their smallest period.
  Let $G_1, G_2, \ldots, G_{m_i}$ be such groups for palindromic suffixes of $T[1.. i]$,
  and let $p_k$ be the smallest period corresponding to $G_k$.
  Also, for each $k$ with $1 \le k \le m_i$, let $E_k$ be the union set of all expansions of every palindrome in $G_k$.
  Let $H_k = \mathcal{W}_i^L \cap E_k$.
  Since $|\mathcal{W}_i^L| = |\bigcup_{k=1}^{m_i} H_k| = \sum_{k=1}^{m_i}|H_k|$ and $m_i \in O(\log n)$,
  it suffices to show that $|H_k|$ is at most a constant.

  For the sake of contradiction, we assume $|H_k|\ge 4$.
  From (\ref{item:period_of_mpal}) of Corollary~\ref{cor:group_of_palsuf},
  at least three palindromes in $H_k$ has the same smallest period $p_k$.
  Also, by (\ref{item:diff_of_centers}) of Corollary~\ref{cor:group_of_palsuf},
  the difference between centers of any two palindromes in $H_k$ is a power of $0.5p_k$.
  Therefore, at least two distinct palindromes in $H_k$ have a difference of power of $p_k$ in their center positions.
  Let $w_1$, $w_2$ be such palindromes, and assume that $|w_1| \le |w_2|$ w.l.o.g..
  Then, the string between the centers of $w_1$ and $w_2$ can be represented by $x^j$ for positive integer $j$
  and string $x$ of length $p_k$.
Since the smallest period of $w_1$ is $p_k$, its extended right arm $\Rarm_{w_1}$ can be written by $\Rarm_{w_1} = x^{j_1}x'_1$ where $j_1$
  is a non-negative integer and $x'_1$ is a proper prefix of $x$.
  Similarly, the extended right arm $\Rarm_{w_2}$ of $w_2$ can be written by $\Rarm_{w_2} = x^{j_2}x'_2$ where $j_2$
  is a non-negative integer and $x'_2$ is a proper prefix of $x$.
  See also Fig.~\ref{fig:size_of_W} for illustration.
If $|w_1| = |w_2|$, then this leads $j_1 = j_2$
  and $x'_1 = x'_2$, i.e., $w_1 = w_2$, a contradiction.
  If $|w_1| < |w_2|$, then $j_1 < j_2$ or $j_1 = j_2$
  and $|x'_1| < |x'_2|$.
  In both cases, $\Rarm_{w_1}$ is a proper prefix of $\Rarm_{w_2}$, i.e, $w_1$ is a contraction of $w_2$.
  This contradicts the minimality of $w_2$.
  Thus $|H_k| \le 3$ holds, and hence, we obtain $|\mathcal{W}_i^L| = \sum_{k=1}^{m_i}|H_k| \le 3m_i \in O(\log n)$.

  Similarly, the size of $\mathcal{W}_i^R$ is also $O(\log n)$. Therefore, $|\mathcal{W}_i| \in O(\log n)$.
  \end{proof}
  \begin{figure}[tb]
    \centerline{
      \includegraphics[scale=0.40]{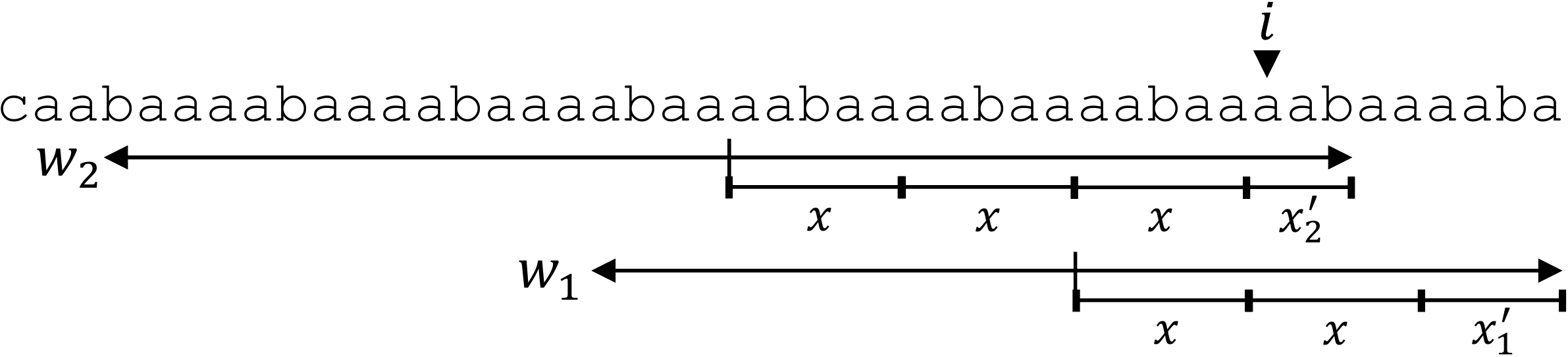}
    }
    \caption{Example for Lemma~\ref{lem:size_of_W}, where
    $w_1$ and $w_2$ have the same smallest period $5$ and the difference between centers of them is a power of $5$.
    Here $x=\mathtt{aabaa}$, $x'_1=\mathtt{aaba}$, and $x'_2=\mathtt{aab}$.
    }
      \label{fig:size_of_W}
\end{figure}
\begin{lemma}\label{lem:num_of_stabbed_MUPS}
  For each position $i$ in $T$, the number of MUPSs covering $i$ is $O(\log n)$.
\end{lemma}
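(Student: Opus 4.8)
The plan is to bound the number of MUPSs stabbed by a fixed position $i$ by decomposing them according to whether their center lies to the left of, at, or to the right of $i$, and then — within each side — grouping them by the arithmetic‑progression structure of palindromic suffixes (resp.\ prefixes) guaranteed by Corollary~\ref{cor:group_of_palsuf}. A MUPS $w$ that covers $i$ and has $\centerpal(w) \le i$ is, in particular, a palindromic suffix of $T[1..\centerpal(w)+\lfloor|w|/2\rfloor]$ extended, but more to the point it is an expansion of a palindromic suffix of $T[1..i]$: reading $w$ from its center rightward up to position $i$ and reflecting gives a palindrome ending at $i$, and $w$ is an expansion of it. (If $\centerpal(w) \ge i$ the symmetric statement with prefixes of $T[i..n]$ holds; a MUPS centered exactly at $i$ falls in both and is counted once.) So it suffices to show that the set of MUPSs of $T$ that are expansions of palindromic suffixes of $T[1..i]$ has size $O(\log n)$, and then invoke the symmetric bound for the right side.

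First I would reuse the grouping from the proof of Lemma~\ref{lem:size_of_W}: write the palindromic suffixes of $T[1..i]$ as groups $G_1,\dots,G_{m_i}$ with $m_i \in O(\log i)$, where $G_k$ collects those sharing smallest period $p_k$, and let $E_k$ be the union of all expansions of palindromes in $G_k$. Since $m_i \in O(\log n)$, it is enough to prove that the number of MUPSs lying in any single $E_k$ is bounded by a constant. Suppose for contradiction that $E_k$ contains at least four MUPSs covering $i$. By item~(\ref{item:period_of_mpal}) of Corollary~\ref{cor:group_of_palsuf}, all but at most one of the maximal palindromes that are expansions of palindromes in $G_k$ share the smallest period $p_k$, so at least three of these MUPSs have smallest period $p_k$; by item~(\ref{item:diff_of_centers}) their pairwise center differences are powers of $0.5p_k$, hence two of them, say $w_1$ and $w_2$ with $|w_1|\le|w_2|$, have center difference a power of $p_k$. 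Exactly as in Lemma~\ref{lem:size_of_W}, the string spanning from $\centerpal(w_1)$ to $\centerpal(w_2)$ is of the form $x^j$ with $|x| = p_k$, and the extended right arms satisfy $\Rarm_{w_1} = x^{j_1}x'_1$, $\Rarm_{w_2} = x^{j_2}x'_2$ with $x'_1,x'_2$ proper prefixes of $x$; since $w_1 \ne w_2$ and $|w_1|\le|w_2|$ we get that $\Rarm_{w_1}$ is a proper prefix of $\Rarm_{w_2}$, so $w_1$ is a palindromic contraction of $w_2$. But $w_1$ is unique in $T$ while it occurs inside $w_2$ at the reflected position, contradicting uniqueness of $w_1$ (equivalently, contradicting that $w_1$ is a proper palindromic substring of a MUPS $w_2$, which would force $w_1$ repeating). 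Hence each $E_k$ holds $O(1)$ MUPSs covering $i$.

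Summing over the $O(\log n)$ groups gives $O(\log n)$ MUPSs with center at most $i$, and the symmetric argument over palindromic prefixes of $T[i..n]$ gives $O(\log n)$ MUPSs with center at least $i$; a MUPS centered exactly at $i$ is counted in both but only contributes $O(1)$ extra, so the total is $O(\log n)$. The main obstacle is the reduction step: being careful that a MUPS covering $i$ really is an expansion of a palindromic suffix of $T[1..i]$ (or prefix of $T[i..n]$), so that the group machinery of Corollary~\ref{cor:group_of_palsuf} applies; once that is in place the per‑group constant bound is essentially the computation already carried out in Lemma~\ref{lem:size_of_W}, with "minimality of $w_2$" replaced by "$w_1$ unique, $w_1$ a contraction of $w_2$" to reach the contradiction.
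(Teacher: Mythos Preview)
Your argument is correct and shares the paper's overall structure: split by center side, observe that a MUPS covering $i$ with center at most $i$ is an expansion of a palindromic suffix of $T[1..i]$, and group those suffixes via Corollary~\ref{cor:group_of_palsuf}. The difference lies only in how the per-group constant is obtained. The paper appeals directly to item~(\ref{item:longest_candidate}): three MUPSs in one group have three distinct centers, hence three distinct maximal palindromes $e_{j_1},e_{j_2},e_{j_3}$; each is contained in the longest or second-longest $e_j$, so by pigeonhole two of them lie inside the same $e_{\max}$, and at least one of those two has center different from $\centerpal(e_{\max})$. That MUPS is then a substring of a palindrome with a different center, which is impossible for a MUPS. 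This yields at most two per group with essentially no computation.

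You instead replay the machinery of Lemma~\ref{lem:size_of_W} via items~(\ref{item:diff_of_centers}) and~(\ref{item:period_of_mpal}), locating two MUPSs whose centers differ by a multiple of $p_k$ and arguing one is a string-contraction of the other to contradict uniqueness. This also works and gives at most three per group. One step you glossed over deserves a sentence: item~(\ref{item:period_of_mpal}) speaks of \emph{maximal} palindromes, not MUPSs, so ``at least three of these MUPSs have smallest period $p_k$'' needs a bridge. Each such MUPS is sandwiched between its palindromic suffix $s\in G_k$ (whose smallest period is $p_k$, so $|s|\ge p_k$) and the maximal palindrome with the same center (period $p_k$ by item~(\ref{item:period_of_mpal})); being a factor of the latter it has period $p_k$, and it cannot have a strictly smaller period because it contains $s$. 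With that line added, your proof is complete; the paper's use of item~(\ref{item:longest_candidate}) is simply a shorter path to the same contradiction.
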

\begin{proof}
  By symmetry, it suffices to show that
  the number of MUPSs covering $i$ and centered \emph{before} $i$ is $O(\log n)$.
  Each of such MUPSs is an expansion of some palindromic suffix of $T[1..i]$.
  Thus, similar to the proof of Lemma~\ref{lem:size_of_W},
  we consider dividing the set of palindromic suffixes of $T[1..i]$
  into $m_i \in O(\log i)$ groups, $G_1, G_2, \ldots, G_{m_i}$ w.r.t. their smallest periods.
In the following, we consider MUPSs that are expansions of palindromes in an arbitrary group $G_k$.
  We show that the number of such MUPSs is at most two by contradiction.
  We assume the contrary, i.e., there are three MUPSs that are expansions of palindromes in $G_k$.
  By (\ref{item:longest_candidate}) of Corollary~\ref{cor:group_of_palsuf},
  at least one of the three MUPSs is a substring of an expansion of a palindrome in $G_k$ with a different center.
  This contradicts that a MUPS cannot be a substring of another palindrome with a different center.
Thus, the number of MUPSs that are expansions of palindromes in $G_k$ is at most two, and we finish the proof.
\end{proof}
By using Lemmas~\ref{lem:same_pal_cannot_cover_i_after_edit}, \ref{lem:size_of_W}, and \ref{lem:num_of_stabbed_MUPS},
we show the following theorem:
\begin{theorem} \label{thm:num_of_removed_MUPS}
  $|\MUPS(T) \bigtriangleup \MUPS(T')| \in O(\log n)$ always holds.
\end{theorem}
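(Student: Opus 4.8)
The plan is to decompose the symmetric difference $\MUPS(T) \bigtriangleup \MUPS(T')$ into several parts according to whether a MUPS covers the edit position $i$ or not, and to bound each part separately. By symmetry between $T$ and $T'$, it suffices to bound $|\MUPS(T) \setminus \MUPS(T')|$, i.e.\ the number of MUPSs that are \emph{destroyed} by the substitution; an identical argument applied to $T'$ bounds the number that are \emph{created}. I would split $\MUPS(T) \setminus \MUPS(T')$ into (i)~MUPSs $w$ of $T$ that cover position $i$, and (ii)~MUPSs $w$ of $T$ that do not cover $i$. Part~(i) is immediately $O(\log n)$ by Lemma~\ref{lem:num_of_stabbed_MUPS}, so the whole work is in part~(ii).

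For part~(ii), fix a MUPS $w = T[b..e]$ with $i \notin [b,e]$. Then $w$ still occurs in $T'$ at $[b,e]$ and $w[2..|w|-1]$ still occurs in $T'$ wherever it occurred in $T$ outside any window touching $i$. Since $w$ was unique in $T$ (one occurrence, at $[b,e]$, not covering $i$) it is still unique in $T'$; and since $w[2..|w|-1]$ was repeating in $T$, the only way $w$ fails to be a MUPS of $T'$ is if \emph{every} occurrence of $w[2..|w|-1]$ other than the one inside $[b+1,e-1]$ was an occurrence that covered $i$ in $T$ — i.e.\ $w[2..|w|-1]$ becomes unique in $T'$. So the inner palindrome $v = w[2..|w|-1]$ satisfies: $v$ is repeating in $T$, $v$ has at least one occurrence covering $i$, and $v$ has exactly one occurrence not covering $i$ (the one at $[b+1,e-1]$); moreover each such $v$ is minimal with this property because $w$ is a MUPS. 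That is exactly the defining condition of the set $\mathcal{W}_i$ (with the roles shifted by the outer characters). Hence the map $w \mapsto w[2..|w|-1]$ sends the MUPSs counted in part~(ii) into $\mathcal{W}_i$. This map need not be injective — two MUPSs $w, w'$ of $T$ could a priori have the same inner palindrome $v$ — but $w$ and $w'$ are then both single-character extensions $a v a$ of $v$, so they share the same center as $v$, and since a MUPS cannot be a substring of another palindrome with a different center, and both have center $\centerpal(v)$, a short argument (comparing the unique occurrences) shows at most a constant number, in fact at most two, MUPSs map to a given $v$. Combining, part~(ii) has size $O(|\mathcal{W}_i|) = O(\log n)$ by Lemma~\ref{lem:size_of_W}.

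Putting the pieces together: $|\MUPS(T)\setminus\MUPS(T')| \le (\text{MUPSs of } T \text{ covering } i) + O(|\mathcal{W}_i|) \in O(\log n)$ using Lemmas~\ref{lem:num_of_stabbed_MUPS} and~\ref{lem:size_of_W}. Running the symmetric argument on $T'$ (note that $\mathcal{W}_i$ and the count of stabbing MUPSs are defined purely in terms of a string and a position, so the same bounds hold for $T'$) gives $|\MUPS(T')\setminus\MUPS(T')|$ — rather, $|\MUPS(T')\setminus\MUPS(T)| \in O(\log n)$, and the theorem follows by $|\MUPS(T)\bigtriangleup\MUPS(T')| = |\MUPS(T)\setminus\MUPS(T')| + |\MUPS(T')\setminus\MUPS(T)|$.

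The main obstacle I anticipate is the precise bookkeeping in part~(ii): one has to verify carefully that when a MUPS $w$ of $T$ not covering $i$ is destroyed, it is destroyed \emph{only} because its inner palindrome loses all its $i$-covering occurrences and the one remaining occurrence is forced to be exactly $[b+1,e-1]$ — this uses Lemma~\ref{lem:same_pal_cannot_cover_i_after_edit} to guarantee that no occurrence of $v$ covering $i$ survives in $T'$, so that $|\xbeg_{T,i}(v)| = 1$ is genuinely the relevant quantity and matches the $\mathcal{W}_i$ definition. The other slightly delicate point is ruling out (or bounding) the case where $w$ is destroyed not by its inner palindrome becoming unique but by $w$ itself ceasing to be unique; but $w$ not covering $i$ retains its unique occurrence and gains no new occurrence, since any new occurrence in $T'$ would have to cover $i$ — so this case does not arise, and it is worth stating explicitly.
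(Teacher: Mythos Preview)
Your overall decomposition matches the paper's, but there is a genuine gap in part~(ii). You claim that a MUPS $w$ of $T$ not covering $i$ \emph{cannot} become repeating in $T'$, writing ``$w$ \ldots gains no new occurrence, since any new occurrence in $T'$ would have to cover $i$ --- so this case does not arise''. This is a non sequitur: a new occurrence of $w$ covering $i$ is still an occurrence, so $w$ \emph{is} then repeating in $T'$ and ceases to be a MUPS. Lemma~\ref{lem:same_pal_cannot_cover_i_after_edit} does not rule this out, because its hypothesis is $\inbeg_{T,i}(w)\ne\emptyset$, whereas here $\inbeg_{T,i}(w)=\emptyset$. The paper treats this as a separate case~(A): such a $w$ satisfies $|\inbeg_{T',i}(w)|\ge 1$ and $|\xbeg_{T',i}(w)|=1$, so one takes the \emph{minimal contraction} $v_1$ with these properties and applies Lemma~\ref{lem:size_of_W} to the string $T'$ (not $T$). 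You have entirely omitted this case.

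A second, smaller gap: in your handling of the remaining case (your only case, the paper's case~(B)), you assert that $v=w[2..|w|-1]$ lies in $\mathcal{W}_i$, in particular that it is \emph{minimal}. This does not follow merely from $w$ being a MUPS of $T$: the contraction $u=v[2..|v|-1]$ can also satisfy $|\inbeg_{T,i}(u)|\ge 1$ and $|\xbeg_{T,i}(u)|=1$, in which case $v\notin\mathcal{W}_i$. The paper avoids this by mapping $w$ not to $w[2..|w|-1]$ but to the MUPS $v_2$ of $T'$ contained in $w$ (which may be a strict contraction of $w[2..|w|-1]$), and then proving minimality of $v_2$ using the fact that $v_2[2..|v_2|-1]$ is repeating in $T'$ together with Lemma~\ref{lem:same_pal_cannot_cover_i_after_edit}. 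Your injectivity-up-to-constant argument would also need to be redone for this corrected map.
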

\begin{proof}
In the following, we consider the number of MUPSs to be removed.

First, at most one interval can be a MUPS of $T$ centered at $i$.
Also, any other interval in $\MUPS(T)$ covering position $i$ cannot be an element of $\MUPS(T')$
since its corresponding string in $T'$ is no longer a palindrome.
By Lemma~\ref{lem:num_of_stabbed_MUPS}, the number of such MUPSs is $O(\log n)$.

Next, let us consider MUPSs not covering position $i$.
When a MUPS $w$ of $T$ not covering $i$ is no longer a MUPS of $T'$, then
either (A) $w$ is repeating in $T'$ or (B) $w$ is unique in $T'$ but is not minimal.

Let $w_1$ be a MUPS of the case (A).
Since $w_1$ does not cover $i$, is unique in $T$, and is repeating in $T'$,
$|\inbeg_{T',i}(w_1)| \ge 1$ and $|\xbeg_{T',i}(w_1)| = 1$.
Let $v_1$ be the minimal contraction of $w_1$ such that
$|\inbeg_{T',i}(v_1)| \ge 1$ and $|\xbeg_{T',i}(v_1)| = 1$.
Contrary, $w_1$ is the only MUPS of the case (A) which is an expansion of $v_1$ since $|\xbeg_{T',i}(v_1)| = 1$.
Namely, there is a one-to-one relation between $w_1$ and $v_1$.
By Lemma~\ref{lem:size_of_W}, the number of palindromes that satisfy the above conditions of $v_1$ is $O(\log n)$.
Thus, the number of MUPSs of the case (A) is also $O(\log n)$.

Let $w_2$ be a MUPS of the case (B).
In $T'$, $w_2$ covers some MUPS as a proper substring since it is not a MUPS and is unique in $T'$.
Let $v_2$ be the MUPS of $T'$, which is a proper substring of $w_2$.
While $v_2$ is unique in $T'$, it is repeating in $T$ since $w_2$ is a MUPS of $T$.
Namely, $|\inbeg_{T,i}(v_2)| \ge 1$ and $|\xbeg_{T,i}(v_2)| = 1$ hold.
Also, $v_2$ is actually minimal: Let $u_2 = v_2[2.. |v_2|-1]$.
If we assume that $|\inbeg_{T,i}(u_2)| \ge 1$ and $|\xbeg_{T,i}(u_2)| = 1$,
then $u_2$ becomes unique in $T'$, and this contradicts that $v_2$ is a MUPS of $T'$.
Furthermore, similar to the above discussions,
there is a one-to-one relation between $w_2$ and $v_2$.
Again by Lemma~\ref{lem:size_of_W}, the number of palindromes that satisfy the above conditions of $v_2$ is $O(\log n)$.
Thus, the number of MUPSs of the case (B) is also $O(\log n)$.

Therefore, $|\MUPS(T)\setminus\MUPS(T')| \in O(\log n)$ holds.
Also, $|\MUPS(T')\setminus\MUPS(T)| \in O(\log n)$ holds by symmetry.

To summarize,
$|\MUPS(T) \bigtriangleup\MUPS(T')| = |\MUPS(T)\setminus\MUPS(T') \cup \MUPS(T')\setminus\MUPS(T)| = |\MUPS(T)\setminus\MUPS(T')| + |\MUPS(T')\setminus\MUPS(T)| \in O(\log n)$.
\end{proof}
 \section{Algorithms for Updating Set of MUPSs} \label{sec:algorithm}
In this section, we propose an algorithm for updating the set of MUPSs when a single-character in the original string is substituted by another character.
We denote by $\sub(i, s)$ the substitution query, that is, to substitute $T[i]$ by another character $s$.
First, we define a sub-problem that will be used in our algorithm:
\begin{problem} \label{prob:lpp_centered_at_i}
Given a substitution query $\sub(i, s)$ on $T$, compute the longest odd-palindromic substring $v$ of $T'$
such that $\centerpal(v) = i$ and $v$ occurs in $T$ if it exists.
Also, if such $v$ exists, determine whether $v$ is unique in $T$ or not.
Furthermore, if $v$ is unique in $T$, compute the contraction of $v$ that is a MUPS of $T$.
\end{problem}
We show the following lemma:
\begin{lemma} \label{lem:lpp_centered_at_i}
  After $O(n)$-time preprocessing, we can answer Problem~\ref{prob:lpp_centered_at_i} in $O(\delta(n,\sigma) + (\log\log n)^2)$ time
  where $\delta(n,\sigma)$ denote the time to retrieve any child of the root of the odd-tree of $\EERTREE(T)$.
\end{lemma}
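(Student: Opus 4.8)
The plan is to answer Problem~\ref{prob:lpp_centered_at_i} with a single path–tree LCE query on $\EERTREE(T)$, preceded by one child‑of‑root lookup in $\Todd$ and a constant number of $O(1)$‑time auxiliary queries (LCE, level‑ancestor, nearest‑marked‑ancestor) on structures built in $O(n)$ preprocessing time. Write $s=T'[i]$ and $\ell_{\max}=\lcp((T[1..i-1])^R,\,T[i+1..n])$; then $T'[i-\ell_{\max}..i+\ell_{\max}]$ is the maximal odd‑palindrome of $T'$ centered at $i$, and $\ell_{\max}$ is computed in $O(1)$ time from $\STree(T\$ T^R\#)$. The first point I would establish is that, since $T$ and $T'$ agree off position $i$, the equalities $T'[i-k]=T'[i+k]$ for $1\le k\le\ell_{\max}$ give $T[i-k]=T[i+k]$ for the same $k$; hence $P:=T[i-\ell_{\max}..i+\ell_{\max}]$ is itself a palindromic substring of $T$, with center character $T[i]$ (possibly not maximal). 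The candidates for $v$ are exactly the concentric contractions $v_\ell:=T'[i-\ell..i+\ell]$, $0\le\ell\le\ell_{\max}$, and $v_\ell=(T[i+1..i+\ell])^R\cdot s\cdot T[i+1..i+\ell]$ occurs in $T$ iff it is the node of $\Todd$ reached from $r_{\odd}$ by reading its extended right arm $s,T[i+1],T[i+2],\ldots,T[i+\ell]$. As a contraction of an occurring palindrome still occurs, the admissible $\ell$ form a prefix $\{0,\ldots,\ell^*\}$ of $\{0,\ldots,\ell_{\max}\}$ — empty exactly when $s$ does not occur in $T$ — and $v=v_{\ell^*}$.

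Thus the core task is: from the child $r_s$ of $r_{\odd}$ labeled $s$ (found in $\delta(n,\sigma)$ time; if absent, there is no $v$), descend in $\Todd$ as far as possible while reading the labels $T[i+1],T[i+2],\ldots,T[i+\ell_{\max}]$, reporting both the depth reached and the node reached. The trick is that this label sequence already appears as an ancestor‑to‑descendant path in $\Todd$: during preprocessing I would record, for every position $c$, the $\Todd$‑node of the maximal odd‑palindrome of $T$ centered at $c$ (precomputed in $O(n)$ total time alongside the eertree/Manacher construction); for our $i$, let $M$ be that node, $v_P$ its depth‑$(\ell_{\max}+1)$ ancestor (a level‑ancestor query — the radius of $M$ is at least $\ell_{\max}$ because $P$ is a palindrome), and $u_P$ the child of $r_{\odd}$ labeled $T[i]$, which is the depth‑$1$ ancestor of $v_P$. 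The edge labels along $u_P\rightsquigarrow v_P$ spell exactly $T[i+1..i+\ell_{\max}]$, so a single path–tree LCE query (Theorem~\ref{thm:pathtreeLCE}) with the triple $(u_P,v_P,r_s)$ returns $\ell^*$ — automatically capped at $\ell_{\max}$, the length of the witness path — and, by also reporting where the matching path out of $r_s$ ends (the match lands exactly on a node, since every edge of $\Todd$ is a single character), it returns the node $z$ with $\pal(z)=v_{\ell^*}=v$; this costs $O((\loglog n)^2)$ time. The degenerate case $\ell_{\max}=0$ is handled directly: then $v=s$ with node $z=r_s$.

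It remains to derive the uniqueness and MUPS outputs from $z$. During preprocessing I would also store, for each node of $\EERTREE(T)$, its number of occurrences in $T$ and one sample occurrence, and mark every node whose palindrome is repeating in $T$ (including the two roots). Then $v$ is unique in $T$ iff the occurrence count at $z$ is $1$. If $v$ is unique, the required MUPS is the shortest unique concentric contraction of $v$; these contractions are exactly the ancestors of $z$, uniqueness is monotone along the root‑to‑$z$ path, so a nearest‑marked‑ancestor query (a special case of Lemma~\ref{lem:nca_log_colors}, $O(1)$ query after $O(n)$‑time construction) returns the deepest repeating ancestor $a$ of $z$, and I output the ancestor of $z$ at depth $\mathrm{depth}(a)+1$ (one level‑ancestor query), whose stored sample occurrence is its unique occurrence, hence its interval in $T$; this node is unique while its parent $a$ is repeating, so it is a MUPS of $T$, and it is the only MUPS among the contractions of $v$. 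All these structures — the enhanced suffix tree, the eertree with counts and sample occurrences, the per‑position maximal‑palindrome nodes, the level‑ancestor structure, the nearest‑marked‑ancestor structure, and the path–tree LCE structure of Theorem~\ref{thm:pathtreeLCE} — occupy $O(n)$ space and are built in $O(n)$ time, so the total query time is $O(\delta(n,\sigma)+(\loglog n)^2)$.

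I expect the main obstacle to be the correctness of that single path–tree LCE query, specifically: (a) verifying that $P$ is genuinely a palindromic substring of $T$, so that the witness path $u_P\rightsquigarrow v_P$ carries the exact string $T[i+1..i+\ell_{\max}]$ and the radius of $M$ is large enough for the level‑ancestor step to be defined; (b) arguing that the value returned equals $\ell^*$ — it cannot overshoot $\ell_{\max}$ because the witness path has that length, and it cannot be distorted by occurrences straddling position $i$ because the query runs entirely on $\EERTREE(T)$, a structure of $T$ alone; and (c) extracting the endpoint node $z$, on which the uniqueness and MUPS steps rely. Everything else — occurrence counts, sample occurrences, level ancestors, marked ancestors, and the fact that ``shortest unique concentric contraction of $v$'' is precisely the sought MUPS — is routine.
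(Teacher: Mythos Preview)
Your proposal is correct and follows essentially the same approach as the paper: locate the child $r_s$ of the odd root in $\delta(n,\sigma)$ time, answer a single path--tree LCE query on $\Todd$ against the path spelling $T[i+1..i+\ell_{\max}]$ (the paper phrases this path as $T[i]\rightsquigarrow\larm_wT[i]\rarm_w$, which is exactly your $u_P\rightsquigarrow v_P$ once one observes, as you do, that $P$ is a palindrome of $T$), and then resolve uniqueness and the MUPS via a nearest-marked-ancestor query. Your write-up is more explicit about locating the path endpoints (via the precomputed maximal-palindrome node and level ancestors) and uses the dual marking scheme (repeating nodes rather than MUPS nodes), but the algorithm and its analysis coincide with the paper's.
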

\begin{proof}
In the preprocessing, we construct $\EERTREE(T)$ and apply the preprocessing for the path-tree LCE queries
to the odd-tree $\Todd$ of $\EERTREE(T)$.
Also, we mark the nodes in $\EERTREE(T)$ that correspond to MUPSs of $T$
and apply the preprocessing for the nearest marked ancestor (NMA) queries.
The preprocessing time is $O(n)$.

Given a query $\sub(i, s)$, we query the path-tree LCE between
path $T[i] \rightsquigarrow \larm_wT[i]\rarm_w$
and tree rooted at $s$ on $\Todd$
where $w$ is the maximal palindrome in $T'$ centered at $i$.
Let $\ell_w$ be the depth of the LCE nodes.
Then the contraction $v$ of $w$ with $|\Rarm_v| = \ell_w$ occurs in $T$.
Also, $v$ is the longest since any other contraction $u$ of $w$ longer than $v$ does not occur in $T$.
Further, we can determine whether $v$ is unique in $T$ or not by checking the existence of a mark on
path $s \rightsquigarrow v$.
It can be done by querying NMA, and the MUPS of $T$ contained in $v$ can be computed simultaneously, if $v$ is unique in $T$.

We can compute the value $\ell_w$ in $\delta(n,\sigma)$ time for searching for the node $s$ in $\Todd$,
plus $O((\log\log n)^2)$ time for the path-tree LCE query.
\end{proof}
When $\sigma \in O(n)$, we can easily achieve $\delta(n, \sigma) \in O(1)$ with linear space, by using an array of size $\sigma$.
Otherwise, we achieve $\delta(n, \sigma) \in O(\log\sigma)$ for a general ordered alphabet by using a binary search tree.

In the rest of this paper, we propose an algorithm to compute the changes in MUPSs after a single-character substitution.
We compute MUPSs to be removed and added separately.
We show how to compute all MUPSs to be removed in Section~\ref{subsec:removedMUPS}.
Also, we show how to compute all MUPSs to be added in Section~\ref{subsec:addedMUPS}.
In Section~\ref{subsec:using_NCA}, we introduce another solution for Problem~\ref{prob:lpp_centered_at_i}.
Our strategy is basically to pre-compute changes in MUPSs for some queries as much as possible within linear time.
The other changes will be detected on the fly by using some data structures.
\subsection{Computing MUPSs to be Removed} \label{subsec:removedMUPS}

We categorize MUPSs to be removed into three types as follows:
\begin{description}
  \item[R1)] A MUPS of $T$ that covers $i$.
  \item[R2)] A MUPS of $T$ that does not cover $i$ and is repeating in $T'$.
  \item[R3)] A MUPS of $T$ that does not cover $i$ and is unique but not minimal in $T'$.
\end{description}
See Fig.~\ref{fig:types_of_removed_MUPS} for illustration.
\begin{figure}[tb]
  \centering
  \includegraphics[width=0.6\linewidth]{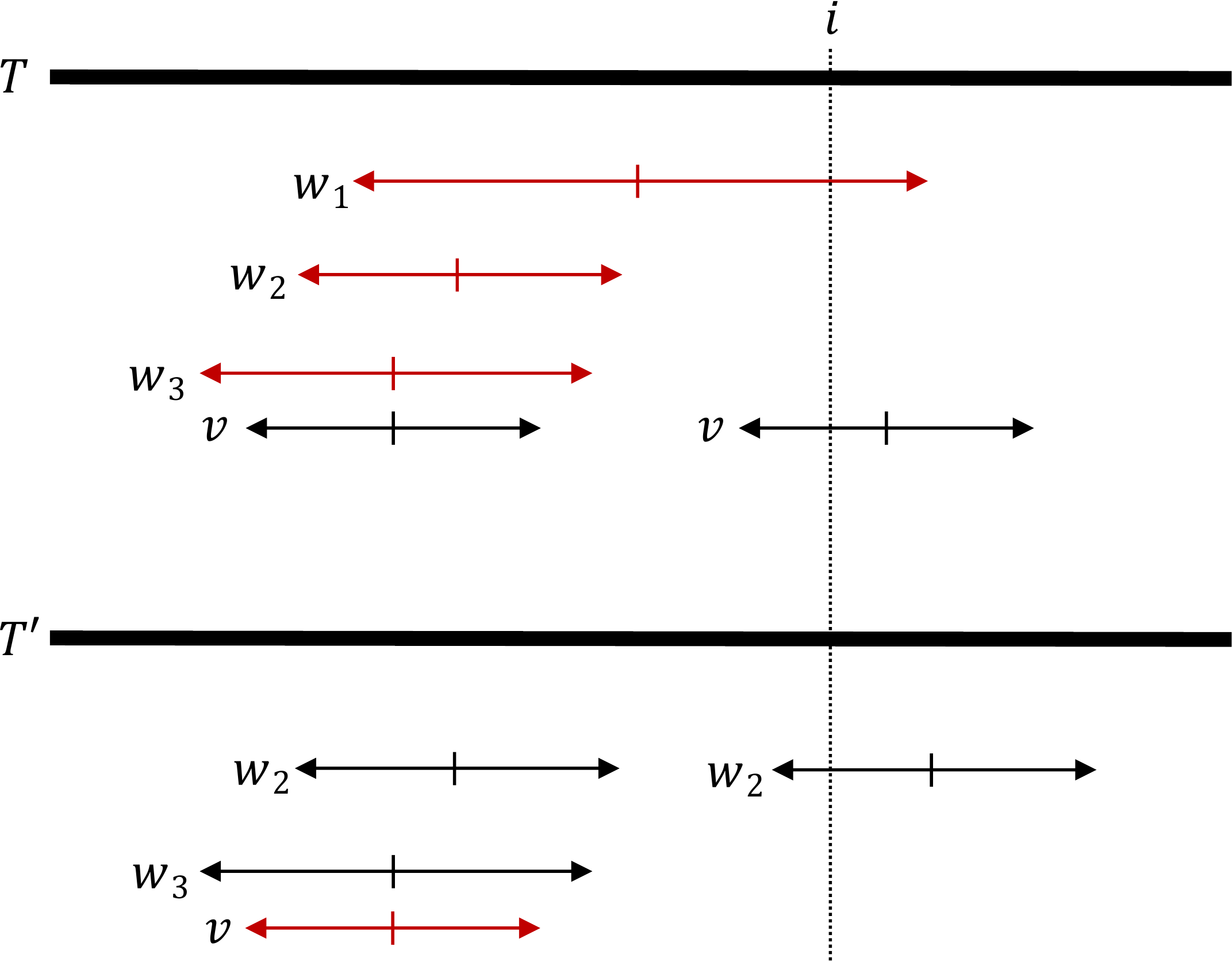}
  \caption{
    Illustration for three types of MUPSs to be removed.
    The red arrows represent MUPSs.
    $w_1$, $w_2$, and $w_3$ are MUPSs of Type R1, Type R2, and Type R3 in $T$, respectively.
    Also, $v$ is the MUPS of $T'$ that is a contraction of $w_3$.
    It is not unique in $T$, but is unique in $T'$.
  }\label{fig:types_of_removed_MUPS}
\end{figure}
In the following, we describe how to compute all MUPSs for each type separately.
\subsubsection*{Type R1.}
All MUPSs covering editing position $i$ are always removed.
Thus, we can detect them in $O(1+\alpha_{\rem})$ time after a simple linear time preprocessing (e.g., using stabbing queries),
where $\alpha_{\rem}$ is the number of MUPSs of Type R1.
\subsubsection*{Type R2.}
Before describing our algorithm, we give a few observations about MUPSs of Type R2.
Let $w$ be a MUPS of Type R2.
Since $w$ is unique in $T$ and is repeating in $T'$, $|\inbeg_{T', i}(w)| \ge 1$.
When $w$ occurs in $T'$ centered at editing position $i$,
we retrieve such $w$ by applying Problem~\ref{prob:lpp_centered_at_i}.
If it is not the case, we can utilize the following observations:
Consider the starting position $j$ of an occurrence of $w$ in $T'$ such that $T'[j.. j+|w|-1] = w$ and $i \in [j, j+|w|-1]$.
If position $i$ is covered in the right arm of $T'[j.. j+|w|-1]$, then $\Larm_w$ occurs at position $j$ in both $T$ and $T'$.
Further, the Hamming distance between $T[j+|\Larm_w|.. j+|w|-1]$ and $w[|\Larm_w|+1.. |w|] = \rarm_w$ equals $1$.
Namely, for each occurrence at position $k$ of string $\Larm_w$ in $T$,
$w$ can occur at $k$ in $T'$ only if the Hamming distance between $T[k+|\Larm_w|.. k+|w|-1]$ and $w[|\Larm_w|+1.. |w|]$ equals $1$.
In other words, if the Hamming distance is greater than $1$, $w$ cannot occur at $k$ in $T'$.
See also Fig.~\ref{fig:type_R2}.
\begin{figure}[tb]
  \centerline{
    \includegraphics[scale=0.45]{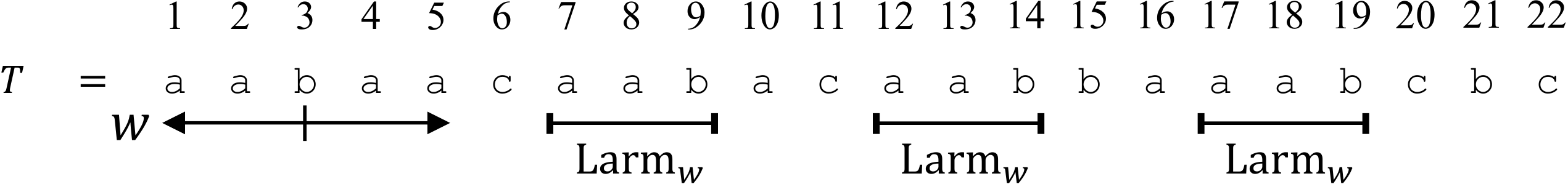}
  }
  \caption{Example for observations about Type R2,
    where $T = \mathtt{aabaacaabacaabbaaabcbc}$ and $w = \mathtt{aabaa}$.
    $\Larm_w = \mathtt{aab}$ occurs at position $7$, $12$, and $17$ excluding the occurrence of $w$.
    Since the Hamming distance between $T[10..11]$ and $w[4..5]$ equals 1, $w$ occurs at position $7$ when $T[11]$ is substituted by $\mathtt{a}$.
    Also, $w$ occurs at position $12$ when $T[15]$ is substituted by $\mathtt{a}$.
    Conversely, $w$ cannot occur at $17$ after any single-character substitution
    since the Hamming distance between $T[20..21]$ and $w[4..5]$ equals 2.
  }
    \label{fig:type_R2}
\end{figure}
The strategy of the algorithm for MUPSs of Type R2 is following:
A MUPS of Type R2 having occurrence centered at editing position $i$ in $T'$ is found by applying Problem~\ref{prob:lpp_centered_at_i}.
For the remaining MUPSs of Type R2, we precompute them by using the above observations.

\paragraph*{Preprocessing.}

In the preprocessing phase,
we first apply the $O(n)$-time preprocessing of Lemma~\ref{lem:lpp_centered_at_i} for Problem~\ref{prob:lpp_centered_at_i}.
Next, we initialize the set $\mathcal{A}_{R2} = \emptyset$.
The set $\mathcal{A}_{R2}$ will become an \emph{index} of MUPSs of Type R2 when the preprocessing is finished.
For each MUPS $w = T[b.. e]$ of $T$, we process the followings:
For the beginning position $j \ne b$ of each occurrence of $\Larm_w$ in $T$,
we first compute the lcp value between $T[j+|\Larm_w|..|T|]\$$ and $\rarm_w$ with allowing one mismatch.
Note that $T[j+|\Larm_w|..|T|]\$$ must have at least one mismatch with $\rarm_w$,
since $T[j.. j+|\Larm_w|-1] = \Larm_w$, $j \ne b$, and $T[b.. e]$ is unique in $T$.
If there are two mismatch positions between them, do nothing for this occurrence since $w$ cannot occur at $j$ after any substitution.
We can check this by querying LCE at most twice.
Otherwise, let $q = j+|\Larm_w|-1+d$ be the mismatched position in $T$.
When the $q$-th character of $T$ is substituted by
the character $\rarm_w[d]$, $w = T[b.. e]$ occurs at $j \ne b$, i.e., it is a MUPS of Type R2.
So we add MUPS $w = T[b.. e]$ into $\mathcal{A}_{R2}$ with the pair of index and character $(q, \rarm_w[d])$ as the key.
In addition, symmetrically, we update $\mathcal{A}_{R2}$ for each occurrence of $\Rarm_w$ in $T$.
After finishing the above processes for every MUPS of $T$, we sort the elements of $\mathcal{A}_{R2}$ by radix sort on the keys.
If there are multiple identical elements with the same key, we unify them into a single element.
Also, if there are multiple elements with the same key, we store them in a linear list.
By Lemma~\ref{lem:different_occ}, the total number of occurrences of arms of MUPSs is $O(n)$,
and hence, the total preprocessing time is $O(n)$.

\paragraph*{Query.}
Given a query $\sub(i, s)$,
we query Problem~\ref{prob:lpp_centered_at_i} with the same pair $(i, s)$ as the input.
Then, we complete checking whether there exists a MUPS of Type R2 centered at $i$.
Next, consider the existence of the remaining MUPSs of Type R2.
First, an element in $\mathcal{A}_{R2}$ corresponding to the key $(i, s)$
can be detected in $O(\log\sigma_i)$ time by using random access on indices and binary search on characters,
where $\sigma_i$ is the number of characters $s_i$ such that the key $(i, s_i)$ exists in $\mathcal{A}_{R2}$.
After that, we can enumerate all the other elements with the key by scanning the corresponding linear list.
Thus, the total query time is $O(\delta(n,\sigma) + (\log\log n)^2 +\log\sigma_i + \beta_{\rem})$ where $\beta_{\rem}$ is the number of MUPSs of Type R2.
Finally, we show $\sigma_i \in O(\min\{\sigma, \log n\})$.
Let us consider palindromes in $T'$ whose right arm covers position $i$.
Those whose left arms cover $i$ can be treated similarly.
Any palindrome in $T'$ whose right arm covers $i$ is an expansion of some maximal palindrome in $T$ ending at $i-1$.
It is known that the number of possible characters immediately preceding such maximal palindromes is $O(\log n)$~\cite{Funakoshi2021longestpal_afteredit}.
Therefore, $\sigma_i \in O(\log n)$ holds, and thus, the query time is
$O(\delta(n,\sigma) + (\log\log n)^2 + \beta_{\rem})$.
\subsubsection*{Type R3.}
Let $w = T[b.. e]$ be a MUPS of $T$ and let $v = T[b+1..e-1]$.
Further let $T[b_{l1}.. e_{l1}]$ and $T[b_{r1}.. e_{r1}]$ be the leftmost and the rightmost occurrence of $v$ in $T$ except for $T[b+1.. e-1]$.
We define interval
$\rho_w = \{ k \mid k \not\in [b+1, e-1] \text{ and } k \in [b_{r1}, e_{l1}]\}$.
Note that $\rho_w$ can be empty.
See also Fig.~\ref{fig:type_R3} for illustration.
\begin{figure}[tb]
  \centerline{
    \includegraphics[scale=0.35]{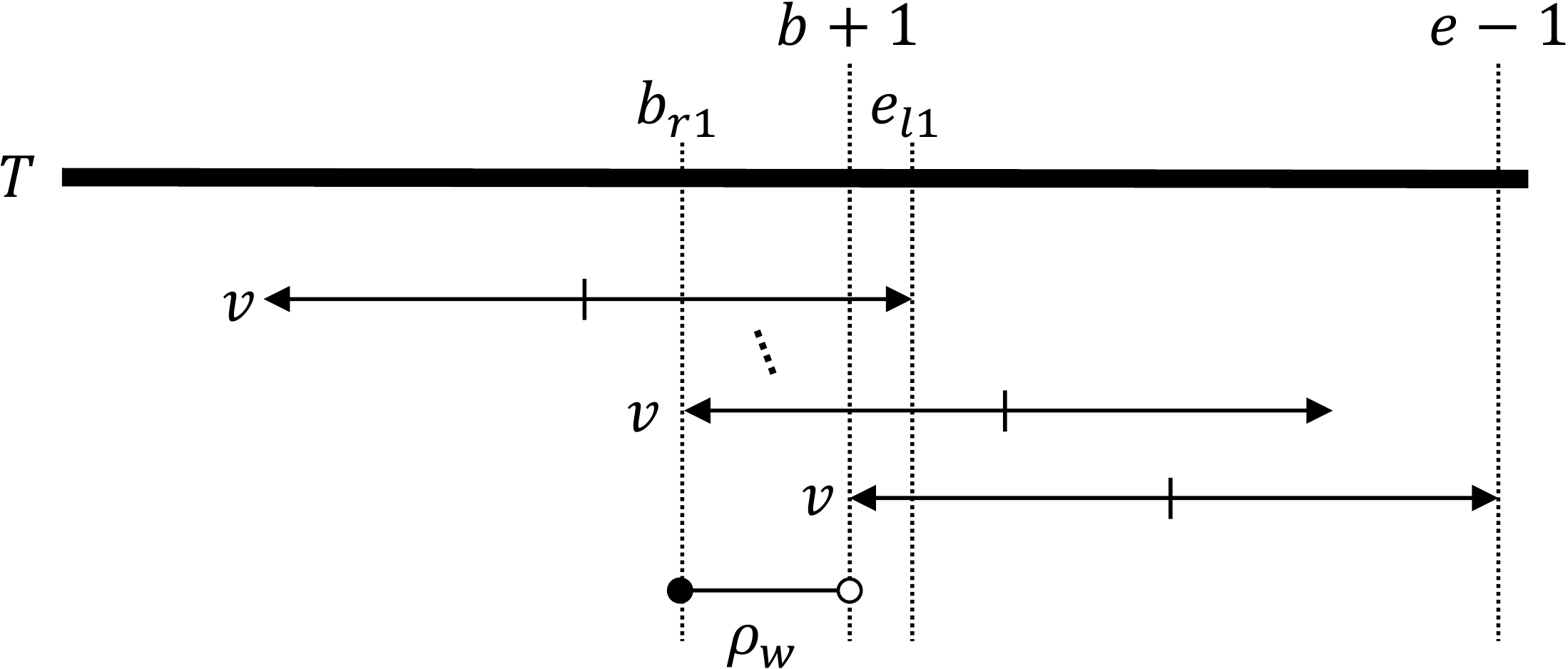}
  }
  \caption{Illustration for $\rho_w$ of Type R3.
  The top arrow (resp. the middle arrow) represents the leftmost (resp. rightmost) occurrence of $v$ except for $T[b+1..e-1]$.
  Also, the bottom arrow represents $T[b+1..e-1]$.
  In this case, $\rho_w = [b_{r1},b]$.
  }
    \label{fig:type_R3}
\end{figure}
If the editing position $i$ is in $\rho_w$, then the only occurrence of $v$ in $T'$ is $T'[b+1.. e-1]$, i.e., $v$ is unique in $T'$.
Thus, $w$ is a removed MUPS of Type R3.
Contrary, if $i \not\in [b, e]$ and $i \not\in \rho_w$, there are at least two occurrences of $v$ in $T'$, i.e., $w$ cannot be a MUPS of Type R3.

\paragraph*{Preprocessing.}
First, we compute the set of intervals $\mathcal{R} = \{\rho_w \mid w \text{ is a MUPS of }T \}$.
$\mathcal{R}$ can be computed by traversing over
the suffix tree of $T$ enhanced with additional explicit nodes, each of which represents a substring $T[b+1..e-1]$ for each MUPS $T[b..e]$ of $T$.
Also, we apply the preprocessing for stabbing queries to $\mathcal{R}$.
The total time for preprocessing is $O(n)$.

\paragraph*{Query.}
Given a query $\sub(i, s)$, compute all intervals in $\mathcal{R}$ stabbed by position $i$ by answering a stabbing query.
They correspond to MUPSs of Type R3.
The query time is $O(1 + \gamma_{\rem})$, where $\gamma_{\rem}$ is the number of MUPSs of Type R3.
\\
\\
\noindent
To summarize, we can compute all MUPSs to be removed after a single-character substitution
in $O(\delta(n,\sigma) + (\log\log n)^2 + \alpha_{\rem} + \beta_{\rem} + \gamma_{\rem})$ time.
 \subsection{Computing MUPSs to be Added} \label{subsec:addedMUPS}
Next, we propose an algorithm to detect MUPSs to be added after a substitution.
As in Section~\ref{subsec:removedMUPS}, we categorize MUPSs to be added into three types:
\begin{description}
  \item[A1)] A MUPS of $T'$ that covers $i$.
  \item[A2)] A MUPS of $T'$ that does not cover $i$ and is repeating in $T$.
  \item[A3)] A MUPS of $T'$ that does not cover $i$ and is unique but not minimal in $T$.
\end{description}
Furthermore, we categorize MUPSs of Type A1 into two sub-types:
\begin{description}
  \item[A1-1)] A MUPS of $T'$ that covers position $i$ in its arm.
  \item[A1-2)] A MUPS of $T'$ centered at editing position $i$.
\end{description}
See Fig.~\ref{fig:types_of_added_MUPS} for illustration.
\begin{figure}[tb]
  \centering
  \includegraphics[width=0.6\linewidth]{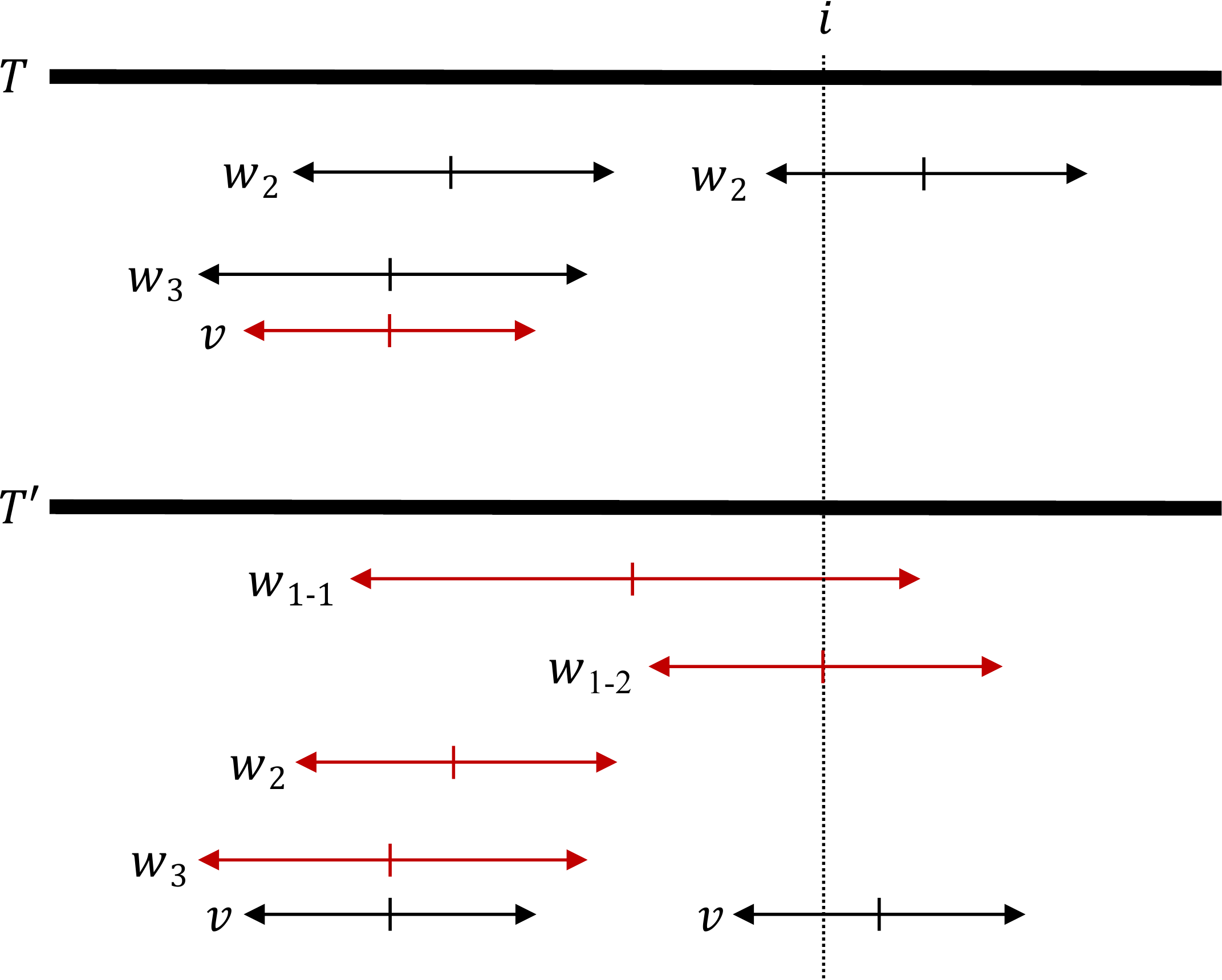}
  \caption{
    Illustration for four types of MUPSs to be added.
The red arrows represent MUPSs.
    $w_{1 \mathchar`- 1}$, $w_{1 \mathchar`- 2}$, $w_2$, and $w_3$ are MUPSs of Type A1-1, Type A1-2, Type A2, and Type A3 in $T'$, respectively.
    Also, $v$ is the MUPS of $T$ that is a contraction of $w_3$.
    It is unique in $T$, but is not unique in $T'$.
  }\label{fig:types_of_added_MUPS}
\end{figure}

\subsubsection*{Type A1-1.}
A MUPS of Type A1-1 is a contraction of some maximal palindrome in $T'$ covering editing position $i$ in its arm.
Further, such a maximal palindrome in $T'$ corresponds to some 1-mismatch maximal palindrome in $T$, which covers $i$ as a mismatch position.
Thus, we preprocess for arms of each 1-mismatch maximal palindrome in $T$.
For MUPSs of Type A1, we utilize the following observation:
\begin{observation} \label{obs:unique_covering_i}
  For any palindrome $v$ covering position $i$ in $T'$,
  $v$ is unique in $T'$ if and only if $|\inbeg_{T', i}(v)| = 1$ and $|\xbeg_{T', i}(v)| = 0$.
\end{observation}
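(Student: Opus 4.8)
The plan is to argue directly from the definitions of $\beg$, $\inbeg$, and $\xbeg$, since the statement is essentially a counting identity. First I would record the basic decomposition: for any string $v$ and any position $i$, the definition $\xbeg_{T',i}(v) = \beg_{T'}(v)\setminus\inbeg_{T',i}(v)$ exhibits $\beg_{T'}(v)$ as the disjoint union of $\inbeg_{T',i}(v)$ and $\xbeg_{T',i}(v)$, so that $|\beg_{T'}(v)| = |\inbeg_{T',i}(v)| + |\xbeg_{T',i}(v)|$. Together with the fact that $v$ is unique in $T'$ exactly when $|\beg_{T'}(v)| = 1$, this is the only machinery needed.

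Next I would use the hypothesis that $v$ covers position $i$ in $T'$: this means $v$ has at least one occurrence in $T'$ containing $i$, i.e. $|\inbeg_{T',i}(v)| \ge 1$. For the ``only if'' direction, assume $v$ is unique in $T'$; then $1 = |\beg_{T'}(v)| = |\inbeg_{T',i}(v)| + |\xbeg_{T',i}(v)|$, and since both summands are nonnegative integers with the first at least $1$, we must have $|\inbeg_{T',i}(v)| = 1$ and $|\xbeg_{T',i}(v)| = 0$. For the ``if'' direction, if $|\inbeg_{T',i}(v)| = 1$ and $|\xbeg_{T',i}(v)| = 0$, then $|\beg_{T'}(v)| = 1$ by the decomposition, so $v$ is unique in $T'$.

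There is no real obstacle here; the proof is a two-line consequence of the disjoint-union decomposition plus the existence of one occurrence covering $i$. The only point to be careful about is not to over-claim: a palindrome covering $i$ can perfectly well have several occurrences covering $i$ (for instance $v = \mathtt{aa}$ at position $2$ in $T' = \mathtt{aaaa}$), so the palindromic structure of $v$ plays no role — it is uniqueness of $v$, not palindromicity, that forces $|\inbeg_{T',i}(v)|$ down to $1$. I would state the observation's proof without invoking any palindrome property, keeping only the hypothesis that $v$ has an occurrence covering $i$.
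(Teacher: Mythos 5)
Your proof is correct and matches the intent of the paper, which states this as an \emph{Observation} without proof precisely because it follows immediately from the disjoint decomposition $\beg_{T'}(v) = \inbeg_{T',i}(v) \cup \xbeg_{T',i}(v)$ and the definition of uniqueness, exactly as you argue. (Your remark that the convention for $\varepsilon$ would break the counting identity is moot here, since a palindrome covering position $i$ is necessarily non-empty.)
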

\paragraph*{Preprocessing.}

In the preprocessing phase,
we first consider sorting extended arms of 1-mismatch maximal palindromes in $T$.
Let $\EARM$ be the multiset
of strings that consists of the extended right arms and the reverse of the extended left arms of all 1-mismatch maximal palindromes in $T$.
Note that each string in $\EARM$ can be represented in constant space since it is a substring of $T$ or $T^R$.
Let $\MA'$ be a lexicographically sorted array of all elements in $\EARM$.
Here, the order between the same strings can be arbitrary.
Also, for each string in $\EARM$, we consider a quadruple of the form $(\mathit{par}, \mathit{pos}, \mathit{chr}, \mathit{rnk})$ where
$\mathit{par} \in \{\odd, \even\}$ represents the parity of the length of the corresponded 1-mismatch maximal palindrome,
$\mathit{pos}$ is the mismatched position on the \emph{opposite} arm,
$\mathit{chr}$ is the mismatched character on the extended arm, and
$\mathit{rnk}$ is the rank of the extended arm in $\MA'$.
Let $\MA$ be a radix sorted array of these quadruples.
It can be seen that for each triple $(p, i, s)$ of parity $p$, mismatched position $i$, and mismatched character $s$,
all elements corresponding to the triple are stored continuously in $\MA$.
We denote by $\MA_{p,i,s}$ the subarray of $\MA$ consists of such elements.
In other words, $\MA_{p,i,s}$ is a sorted array of extended arms of
maximal palindromes of parity $p$
covering position $i$ in $T'$
when the $i$-th character of $T$ is substituted by $s$.

Now let us focus on odd-palindromes. Even-palindromes can be treated similarly.
We construct the suffix tree of $T$ and make the loci of strings in $\EARM$ explicit.
We also make the loci of the extended right arm of every odd-palindrome in $T$ explicit.
Simultaneously, we \emph{mark} the nodes corresponding to the extended right arms and apply the preprocessing for the nearest marked ancestor~(NMA) queries to the marked tree.
We denote the tree by $\mathcal{ST}_{\mathit{odd}}$.
Next, we initialize the set $\mathcal{A}_{A1,1} = \emptyset$.
The set $\mathcal{A}_{A1,1}$ will become an index of MUPSs of Type A1-1 when the preprocessing is finished.
For each non-empty $\MA_{\odd,i,s}$ and for each string $w$ in $\MA_{\odd,i,s}$, we do the followings:
Let $x_w$ be the odd-palindrome whose extended right arm is $w$ when $T[i]$ is substituted by $s$.
Let $u$ and $v$ are the preceding and the succeeding string of $w$ in $\MA_{\odd, i, s}$ (if such palindromes do not exist, they are empty).
Further let $\ell_w = \max\{\lcp(u, w), \lcp(w, v)\}$.
When $T[i]$ is substituted by $s$,
any contraction $y$ of $x_w$ such that $y$ covers position $i$ and the arm-length of $y$ is at least $\ell_w$
has only one occurrence which covers position $i$ in $T'$, i.e.,$|\inbeg_{T', i}(y)| = 1$.
Next, we query the NMA for the node corresponding to $w$
on $\mathcal{ST}_\mathit{odd}$.
Let $\ell'_w$ be the length of the extended right arm obtained by the NMA query.
When $T[i]$ is substituted by $s$,
any contraction $y'$ of $x_w$ such that the arm-length of $y'$ is at least $\ell'_w$, has no occurrences which do not cover position $i$ in $T'$,
i.e., $|\xbeg_{T', i}(y')| = 0$.
Thus, by Observation~\ref{obs:unique_covering_i}, the contraction $y^\star$ of $x_w$ of arm-length $\max\{\ell_w, \ell'_w\}$ is a MUPS of Type A1-1
for the query $\sub(i,s)$, if such $y^\star$ exists.
In such a case, we store the information about $y^\star$~(i.e., its center and radius) into $\mathcal{A}_{A1,1}$ using $(\odd, i, s)$ as the key.
After finishing the above preprocessing for all strings in $\MA$, we sort all elements in $\mathcal{A}_{A1,1}$ by their keys.

Since each element in $\EARM$ is a substring of $T\$T^R\#$,
they can be sorted in $O(n+|\EARM|) = O(n)$ time by Corollary~\ref{cor:sort_substrings}.
Namely, $\MA'$ can be computed in linear time, and thus $\MA$ too.
By Lemma~\ref{lem:make_explicit_node}, tree $\mathcal{ST}_\mathit{odd}$ can be constructed in $O(n)$ time.
Also, we can answer each NMA query and LCP query in constant time after $O(n)$ time preprocessing.
Hence, the total preprocessing time is $O(n)$.

\paragraph*{Query.}
Given a query $\sub(i, s)$, we compute all MUPSs of Type A1-1
by searching for elements in $\mathcal{A}_{A1,1}$ with keys $(\odd, i, s)$ and $(\even, i, s)$.
An element with each of the keys can be found in $O(\log\min\{\sigma,\log n\})$ time.
Thus, all MUPSs of Type A1-1 can be computed in
$O(\log\min\{\sigma,\log n\}+ \alpha'_{\add})$ time
where $\alpha'_{\add}$ is the number of MUPSs of Type A1-1.
\subsubsection*{Type A1-2.}
The MUPS of Type A1-2 is a contraction of the maximal palindrome in $T'$ centered at $i$.
By definition, there is at most one MUPS of Type A1-2.
\paragraph*{Preprocessing.}

In the preprocessing phase, we again construct $\MA$ and related data structures as in Type A1-1.
Further, we apply the $O(n)$-time preprocessing of Lemma~\ref{lem:lpp_centered_at_i} for Problem~\ref{prob:lpp_centered_at_i}.
The total preprocessing time is $O(n)$.

\paragraph*{Query.}
Given substitution query $\sub(i,s)$, we compute the MUPS centered at $i$ in $T'$ as follows~(if it exists):
It is clear that $T'[i..i] = s$ is the MUPS of Type A1-2 if $s$ is a unique character in $T'$.
In what follows, we consider the other case.
Let $w$ be the maximal palindrome centered at $i$ in $T'$.
First, we compute the maximum lcp value $\ell_w$ between $\Rarm_w$ and extended arms in $\MA_{\odd, i, s}$.
Then, any contraction $y$ of $w$, such that the arm-length of $y$ is at least $\ell_w$, has no occurrences which cover position $i$ in $T'$,
i.e., $|\inbeg_{T', i}(y)| = 1$.
We can compute $\ell_w$ in $O(\log\min\{\sigma,\log n\})$ time, combining LCE queries and binary search.
Note that $\rarm_w$ occurs at $i+1$ in both $T$ and $T'$ while $\Rarm_w$ might be absent from $T$.
Next we compute the arm-length $\ell'_w$ of the shortest palindrome $v$
such that $\centerpal(v) = i$ and $|\xbeg_{T', i}(v)| = 0$, i.e., $v$ is absent from $T$.
Since the contraction $\tilde{v}$ of $w$ of arm-length $\ell'_w-1$ is the longest palindrome
such that $\centerpal(\tilde{v}) = i$ and $\tilde{v}$ occurs in $T$,
we can reduce the problem of computing $\ell'_w$ to Problem~\ref{prob:lpp_centered_at_i}.
Thus, we can compute $\ell'_w$ in $O(\delta(n,\sigma) + (\log\log n)^2)$ time by Lemma~\ref{lem:lpp_centered_at_i}.
Similar to the case of Type A1-1, by Observation~\ref{obs:unique_covering_i},
the contraction $y^\star$ of $x_w$ of arm-length $\max\{\ell_w, \ell'_w\}$ is a MUPS of Type A1-2, if such $y^\star$ exists.
Therefore, the MUPS of Type A1-2 can be computed in $O(\delta(n,\sigma) + (\log\log n)^2)$ time.
\subsubsection*{Type A2.}
A MUPS of Type A2 occurs at least twice in $T$, and there is only one occurrence not covering editing position $i$.
For a palindrome $w$ repeating in $T$,
let $T[b_{l1}.. e_{l1}]$ and $T[b_{l2}.. e_{l2}]$ be the leftmost and the second leftmost occurrence of $w$ in $T$.
Further, let $T[b_{r1}.. e_{r1}]$ and $T[b_{r2}.. e_{r2}]$ be the rightmost and the second rightmost occurrence of $w$ in $T$.
We define interval $\rho_w$ as the intersection of all occurrences of $w$ except for the leftmost one,
i.e., $\rho_w = \{ k \mid k \not\in [b_{l1}, e_{l1}] \text{ and } k \in [b_{r1}, e_{l2}]\}$~(see also Fig.~\ref{fig:type_A2}).
\begin{figure}[tb]
  \centerline{
    \includegraphics[scale=0.35]{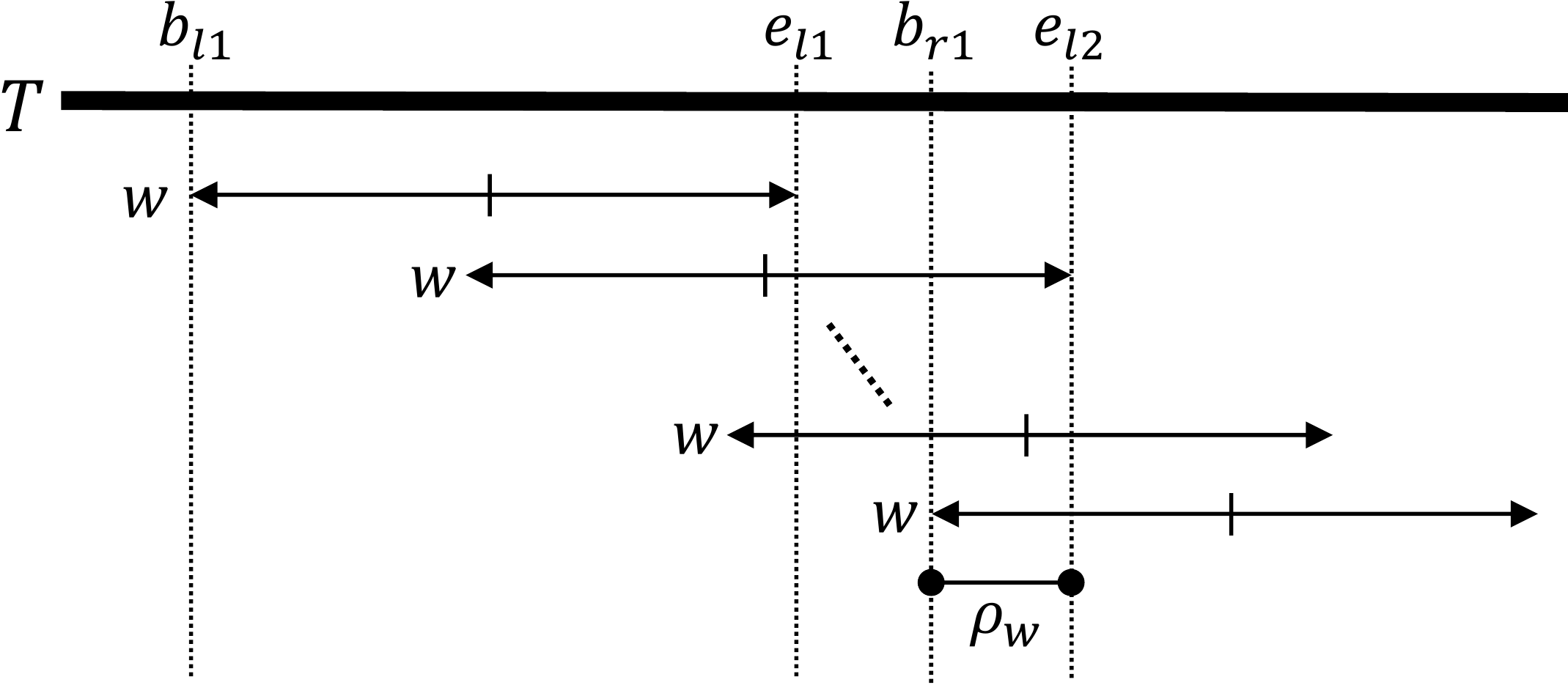}
  }
  \caption{Illustration for $\rho_w$ of Type A2.
  The top two arrows represent the leftmost and the second leftmost occurrence of $w$.
  Also, the bottom two arrows represent the second rightmost and the rightmost occurrence of $w$.
  In this case, $\rho_w = [b_{r1},e_{l2}]$.
  }
    \label{fig:type_A2}
\end{figure}
Similarly, we define interval $\tilde{\rho}_w$ as the intersection of all occurrences of $w$ except for the rightmost one.
Note that $\rho_w$ and $\tilde{\rho}_w$ can be empty.
Then, $w$ is unique after the $i$-th character is edited if and only if $i \in \rho_w \cup \tilde{\rho}_w$.
Thus, any MUPS of Type A2 is a palindrome corresponding to some interval in $\rho_w \cup \tilde{\rho}_w$ stabbed by $i$.
To avoid accessing intervals that do not correspond to the MUPSs to be added, we decompose each $\rho_w$.
It is easy to see that for any contraction $v$ of $w$, $\rho_v \subset \rho_w$ holds.
Also, if $T[i]$, with $i \in \rho_{v}$, is edited,
then both $w$ and $v$ become unique in $T'$, i.e., $w$ cannot be a MUPS of $T'$.
For each unique palindrome $w$ in $T$,
we decompose $\rho_w$ into at most three intervals $\rho_w = \rho^1_w\rho_{w'}\rho^2_w$ where $w' = w[2.. |w|-1]$.
Similarly, we decompose $\tilde{\rho}_w$ into $\tilde{\rho}_w = \tilde{\rho}^1_w\tilde{\rho}_{w'}\tilde{\rho}^2_w$.
Then, $w$ is a MUPS of Type A2 if and only if $i \in \rho^1_w \cup \rho^2_w \cup \tilde{\rho}^1_w \cup \tilde{\rho}^2_w$.

\paragraph*{Preprocessing.}

In the preprocessing phase, we first construct the eertree of $T$ and
the suffix tree of $T$ enhanced with additional explicit nodes for all distinct palindromes in $T$.
Next, we compute at most four (leftmost, second leftmost, rightmost, second rightmost) occurrences of each palindrome in $T$
by traversing the enhanced suffix tree.
At the same time, we compute $\rho_w$ and $\tilde{\rho}_w$ for each palindrome in $w$.
Next, we sequentially access distinct palindromes by traversing $\EERTREE(T)$ in a pre-order manner.
Then, for each palindrome $w$, we decompose $\rho_w$ and $\tilde{\rho_w}$ based on the rules as mentioned above.
Finally, we apply the preprocessing for stabbing queries to the $O(n)$ intervals obtained.
The total preprocessing time is $O(n)$. 

\paragraph*{Query.}
Given a query $\sub(i, s)$, we compute all intervals stabbed by position $i$.
The palindromes corresponding to the intervals are MUPSs of Type A2.
Hence, the query time is $O(1 + \beta_{\add})$, where $\beta_{\add}$ is the number of MUPSs of Type A2.
\subsubsection*{Type A3.}
A MUPS of Type A3 is unique but not minimal in $T$.
Such a unique palindrome $u$ in $T$ contains a MUPS $w \ne u$ of $T$ as a contraction.
Since $u$ is a MUPS of $T'$, $w$ is repeating in $T'$, i.e., $w$ is a removed MUPS of Type R2.
Contrary, consider a MUPS $w$ of Type R2, which is repeating in $T'$.
Then, the shortest unique expansion of $w$ in $T'$ is an added MUPS of Type A3, if it exists.
The preprocessing for Type A3 is almost the same as for Type R2.
We store a bit more information for Type A3 in addition to the information in $\mathcal{A}_{R2}$.
\paragraph*{Preprocessing.}

In the preprocessing phase,
we first apply the $O(n)$-time preprocessing of Lemma~\ref{lem:lpp_centered_at_i} for Problem~\ref{prob:lpp_centered_at_i}.
Next, we initialize the set $\mathcal{A}_{A3} = \emptyset$.
This set $\mathcal{A}_{A3}$ will become an index of MUPSs of Type A3 when the preprocessing is finished.
For each MUPS $w = T[b.. e]$ of $T$, we process the followings:
For the beginning position $j \ne b$ of each occurrence of $\Larm_w$ in $T$,
we compute the lcp value $\ell_j$ between $T[j+|\Larm_w|..|T|]\$$ and $T[\lceil c \rceil..|T|]\$$ with allowing one mismatch
where $c$ is the center of $w$ in $T$.
If $\ell_j$ is smaller than $\rarm_w$, then we do nothing for this occurrence since $w$ cannot occur at $j$ after any single-character substitution.
Otherwise, let $q = j+|\Larm_w|-1+d$ be the first mismatched position in $T$.
When the $q$-th character of $T$ is substituted by the character $\rarm_w[d]$,
$w = T[b.. e]$ occurs at $j \ne b$, i.e., it is a MUPS of Type R2.
Unlike for Type R2,
we add the \emph{pair} of MUPS and (1-mismatched) lcp value $(T[b.. e], \ell_j)$ into $\mathcal{A}_{A3}$
with the pair of index and character $(q, \rarm_w[d])$ as the key.
In addition, symmetrically, we update $\mathcal{A}_{A3}$ for each occurrence of $\Rarm_w$ in $T$.
After finishing the above processes for every MUPS of $T$, we then sort the elements of $\mathcal{A}_{A3}$ by radix sort on the keys.
If there are multiple identical elements with the same key, we unify them into a single element.
Also, if there are multiple elements with the same key, we store them in a linear list.
By Lemma~\ref{lem:different_occ}, the total number of occurrences of arms of MUPSs is $O(n)$,
and hence, the total preprocessing time is $O(n)$.

\paragraph*{Query.}
Given a query $\sub(i, s)$,
we query Problem~\ref{prob:lpp_centered_at_i} with the same pair $(i, s)$ as the input.
Then, we complete checking whether there exists a MUPS of Type A3 centered at $i$.
For the remaining MUPSs of Type 3,
we retrieve the MUPSs of Type A3 using the index $\mathcal{A}_{A3}$
as in the query algorithm for Type R2.
This can be done in $O(\log\min\{\sigma, \log n\} + \gamma_{\add})$ time
where $\gamma_{\add}$ is the number of MUPSs of Type A3.
Therefore, the total query time of Type A3 is $O(\delta(n,\sigma) + (\log\log n)^2 + \gamma_{\add})$.
\\
\\
\noindent
To summarize,
we can compute all MUPSs to be added after a single-character substitution
in $O(\delta(n,\sigma) + (\log\log n)^2 + \alpha'_{\add} + \beta_{\add} + \gamma_{\add})$ time.
Then, combining the results of Section~\ref{subsec:removedMUPS} with the above results, we obtain the following theorem:
\begin{theorem}\label{thm:using_PTLCE}
  After $O(n)$-time preprocessing,
  we can compute the set of MUPSs after a single-character substitution
  in $O(\delta(n, \sigma) + (\log\log n)^2 + d) \subset O(\log n)$ time
  where $d$ is the number of changes of MUPSs.
\end{theorem}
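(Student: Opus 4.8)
The plan is to assemble the seven sub-algorithms built above — one for each of the types R1, R2, R3, A1-1, A1-2, A2, A3 — and to check three things: that their preprocessing phases jointly run in $O(n)$ time and space, that their query phases jointly run within the claimed bound, and that the union of their outputs is exactly $\MUPS(T)\bigtriangleup\MUPS(T')$.

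For the preprocessing, I would simply execute every preprocessing routine described in Sections~\ref{subsec:removedMUPS} and~\ref{subsec:addedMUPS}: constructing $\STree(T)$ and $\EERTREE(T)$, the path--tree LCE structure on $\Todd$ and the structure of Lemma~\ref{lem:lpp_centered_at_i}, the various nearest-marked-ancestor structures, the stabbing-query structures on $\mathcal{R}$ and on the decomposed $\rho_w$/$\tilde{\rho}_w$ intervals, and the sorted arrays $\MA$, $\mathcal{A}_{R2}$, $\mathcal{A}_{A1,1}$, $\mathcal{A}_{A3}$. Each of these was shown to cost $O(n)$ time and space, common components (the eertree, the suffix tree with extra explicit nodes, the structure of Lemma~\ref{lem:lpp_centered_at_i}) can be built once and shared, and there is only a constant number of routines, so the total preprocessing cost is $O(n)$.

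Given a query $\sub(i,s)$, I would run all seven query routines. Summing the bounds proved in the subsections, the removed MUPSs are found in $O(\delta(n,\sigma)+(\loglog n)^2+\alpha_{\rem}+\beta_{\rem}+\gamma_{\rem})$ time and the added ones in $O(\delta(n,\sigma)+(\loglog n)^2+\alpha'_{\add}+\beta_{\add}+\gamma_{\add})$ time, the unique possible MUPS of Type A1-2 contributing only the additive $O(\delta(n,\sigma)+(\loglog n)^2)$; the scattered logarithmic factors $O(\log\sigma_i)$ and $O(\log\min\{\sigma,\log n\})$ are all $O(\loglog n)\subseteq O((\loglog n)^2)$ since $\sigma_i\in O(\log n)$. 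The step needing the most care — the main obstacle — is to see that the counts produced across the routines sum to $O(d)$, rather than merely that each is $O(\log n)$. For this I would observe that the types form partitions: a removed MUPS of $T$ (a palindrome unique and minimal in $T$ that is not a MUPS of $T'$) either covers $i$ (Type R1) or not, and if it does not cover $i$ it is either repeating in $T'$ (Type R2) or unique but not minimal in $T'$ (Type R3), which are exhaustive and mutually exclusive; symmetrically Types A1, A2, A3 partition the added MUPSs, and A1 splits into A1-1 and A1-2 according to whether $i$ lies in an arm or at the center. Since each sub-algorithm was designed to report exactly the MUPS changes of its own type and nothing else, the total number of reported changes is $|\MUPS(T)\setminus\MUPS(T')|+|\MUPS(T')\setminus\MUPS(T)| = |\MUPS(T)\bigtriangleup\MUPS(T')| = d$, so $\alpha_{\rem}+\beta_{\rem}+\gamma_{\rem}+\alpha'_{\add}+\beta_{\add}+\gamma_{\add} \le d$ and no deduplication or post-processing is needed.

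Finally, for the inclusion $O(\delta(n,\sigma)+(\loglog n)^2+d)\subseteq O(\log n)$ I would use that $\delta(n,\sigma)\in O(\log\sigma)=O(\log n)$ because $\sigma=n^{O(1)}$, that $(\loglog n)^2\in O(\log n)$, and that $d\in O(\log n)$ by Theorem~\ref{thm:num_of_removed_MUPS}.
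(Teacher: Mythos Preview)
Your proposal is correct and follows exactly the paper's approach: the theorem is obtained by assembling the preprocessing and query routines of the seven types R1--R3, A1-1, A1-2, A2, A3 developed in Sections~\ref{subsec:removedMUPS} and~\ref{subsec:addedMUPS}, summing their costs, and invoking Theorem~\ref{thm:num_of_removed_MUPS} for the $O(\log n)$ inclusion. Your write-up is in fact more explicit than the paper's, which simply states ``combining the results'' without spelling out the partition argument that ensures the output counts sum to $d$ rather than merely $O(\log n)$ each.
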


\subsection{Alternative Algorithm for Problem~\ref{prob:lpp_centered_at_i}} \label{subsec:using_NCA}
The query time of Theorem~\ref{thm:using_PTLCE} is dominated by
the time to answer Problem~\ref{prob:lpp_centered_at_i}.
Here, we introduce another solution for Problem~\ref{prob:lpp_centered_at_i}
utilizing \emph{nearest colored ancestor queries} instead of path-tree LCE queries.
\paragraph*{Preprocessing for Problem~\ref{prob:lpp_centered_at_i}.}
We first construct the suffix tree of $T\$$.
Also, for each odd-palindrome in $T$,
we make the locus of the right arm explicit
and label the node with the pair of the center character and the binary flag that indicates if the palindrome is a MUPS.
We regard the pair as the \emph{color} of the node.
Furthermore, we apply a preprocessing for NCA queries to the colored tree\footnote{
  There can be a node with multiple colors in the tree.
  However, we can easily avoid such a situation by copying a node with $k$ colors to $k$ nodes.
  Also, in the case of Problem~\ref{prob:lpp_centered_at_i}, the cumulative total number of colored nodes is $O(n)$.
}.
The preprocessing time is $O(n+c_{\nca}(n,\sigma))$, where $c_{\nca}(n,\sigma)$ is the preprocessing time for NCA queries.

\paragraph*{Query for Problem~\ref{prob:lpp_centered_at_i}.}
Given a substitution query $\sub(i,s)$, we start at the node
corresponding to $\rarm_w$ where $w$ is the maximal palindrome in $T$ centered at $i$.
We then compute the nearest ancestor $V$ colored with $(s, \mathsf{0})$ by using NCA query.
If such node $V$ exists, palindrome $P = \str(V)^R\cdot s\cdot\str(V)$ is the answer of the former part of Problem~\ref{prob:lpp_centered_at_i}
where $\str(V)$ denotes the string corresponding to $V$ in the enhanced suffix tree of $T\$$.
Also, we query NCA $(s, \mathsf{1})$ from $V$.
We can determine if $P$ is unique, and if it is unique, we can find the MUPS contained in $P$.
The query time is $O(q_{\nca}(n,\sigma))$
where $q_{\nca}(n,\sigma)$ is the query time for NCA.

Let $s_{\nca}(n,\sigma)$ denote the space for the NCA data structure.
We obtain the following theorem:
\begin{theorem}\label{thm:using_NCA}
  After $O(n + c_{\nca}(n,\sigma))$-time and $O(n + s_{\nca}(n,\sigma))$-space preprocessing,
  we can compute the set of MUPSs after a single-character substitution in $O(q_{\nca}(n,\sigma) + \log\min\{\sigma,\log n\} + d)$ time.
\end{theorem}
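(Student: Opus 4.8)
The plan is to assemble Theorem~\ref{thm:using_NCA} from exactly the same pieces that prove Theorem~\ref{thm:using_PTLCE}, replacing only the module that answers Problem~\ref{prob:lpp_centered_at_i}. Recall that in the previous subsection the total query time broke down as $O(\delta(n,\sigma)+(\log\log n)^2+d)$, where the $(\log\log n)^2$ term came solely from the path-tree LCE queries used inside Lemma~\ref{lem:lpp_centered_at_i}, and $\delta(n,\sigma)$ came from locating the child $s$ of the odd-root. Every other contribution — Type R1 via stabbing queries, Type R2 via the index $\mathcal{A}_{R2}$ in $O(\log\min\{\sigma,\log n\}+\beta_{\rem})$, Type R3 via stabbing, Types A1-1 and A1-2, A2, A3 — already fits inside $O(\log\min\{\sigma,\log n\}+d)$ except where it invokes Problem~\ref{prob:lpp_centered_at_i}. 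So the whole theorem reduces to: give a preprocessing/query tradeoff for Problem~\ref{prob:lpp_centered_at_i} with preprocessing $O(n+c_{\nca}(n,\sigma))$ time, $O(n+s_{\nca}(n,\sigma))$ space, and query time $O(q_{\nca}(n,\sigma))$, and then substitute it everywhere Lemma~\ref{lem:lpp_centered_at_i} was used.

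First I would establish correctness of the NCA-based solution for Problem~\ref{prob:lpp_centered_at_i}. The key observation is that the longest odd-palindrome $v$ centered at $i$ in $T'$ that occurs in $T$ is exactly $\str(V)^R\cdot s\cdot\str(V)$, where $V$ is the deepest ancestor, in the suffix tree of $T\$$ enhanced with explicit loci for all odd-palindrome right arms, of the node for $\rarm_w$ (with $w$ the maximal palindrome in $T$ centered at $i$) whose color has first component $s$. This is because a contraction of $w$ of right-arm length $\ell$, with center character replaced by $s$, occurs in $T$ if and only if the length-$\ell$ prefix of $\rarm_w$ is the right arm of some odd-palindrome in $T$ with center character $s$ — which is precisely the condition that the depth-$\ell$ ancestor of $\rarm_w$ carries a color whose first component is $s$. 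Nearest colored ancestor with color first-component $s$ returns the deepest such node, hence the longest such $v$. The color's second component (the MUPS flag) then lets a second NCA query from $V$, asking for color $(s,\mathsf{1})$, decide uniqueness of $v$ in $T$ and, if so, recover the MUPS contained in $v$ — mirroring the NMA argument already used in Lemma~\ref{lem:lpp_centered_at_i}. I would note that a node may carry several colors, but the footnote's node-splitting trick keeps the tree size $O(n)$ since the total number of (right arm, center, flag) triples is $O(n)$.

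Second I would do the timing bookkeeping. Preprocessing: building the suffix tree is $O(n)$, making the $O(n)$ loci explicit is $O(n)$ by Lemma~\ref{lem:make_explicit_node}, and the NCA structure costs $c_{\nca}(n,\sigma)$ time and $s_{\nca}(n,\sigma)$ space; total $O(n+c_{\nca}(n,\sigma))$ and $O(n+s_{\nca}(n,\sigma))$. Query for Problem~\ref{prob:lpp_centered_at_i}: computing the maximal palindrome $w$ centered at $i$ in $T$ — hence in $T'$ — and locating $\rarm_w$'s node takes $O(1)$ via LCE queries and precomputed maximal-palindrome data, then two NCA queries cost $O(q_{\nca}(n,\sigma))$, so the total is $O(q_{\nca}(n,\sigma))$. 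Plugging this into the full update algorithm, the $\delta(n,\sigma)+(\log\log n)^2$ term of Theorem~\ref{thm:using_PTLCE} is everywhere replaced by $q_{\nca}(n,\sigma)$, while the $O(\log\min\{\sigma,\log n\})$ terms from the $\mathcal{A}_{R2}$, $\mathcal{A}_{A1,1}$, $\mathcal{A}_{A3}$ lookups remain; summing the six removed-types and four added-types contributions, each of the form $O(q_{\nca}(n,\sigma)+\log\min\{\sigma,\log n\}+(\text{local count}))$, and recalling that the local counts sum to $O(d)$ by Theorem~\ref{thm:num_of_removed_MUPS}, gives overall query time $O(q_{\nca}(n,\sigma)+\log\min\{\sigma,\log n\}+d)$.

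I do not expect a serious obstacle here; this theorem is essentially a re-packaging. The one point that needs care is making sure the NCA color scheme really captures uniqueness correctly when the center character $s$ coincides with $T[i]$ or when $v$ degenerates to the single character $s$ — but these are the same edge cases already handled in Lemma~\ref{lem:lpp_centered_at_i} (and in the A1-2 discussion, where $T'[i..i]=s$ is dealt with separately when $s$ is unique), so the argument transfers verbatim. A secondary bit of care is confirming that every other module genuinely avoids path-tree LCE queries and only uses the $\sub(i,s)$ primitive through Problem~\ref{prob:lpp_centered_at_i}; a quick audit of Sections~\ref{subsec:removedMUPS} and~\ref{subsec:addedMUPS} confirms this, so the substitution is clean and the stated bounds follow.
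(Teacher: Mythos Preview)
Your proposal is correct and follows essentially the same approach as the paper: replace the path-tree-LCE solution to Problem~\ref{prob:lpp_centered_at_i} by an NCA-based one on the suffix tree of $T\$$, with nodes colored by (center character, MUPS flag), and plug the resulting $O(q_{\nca}(n,\sigma))$ query bound into the existing decomposition of Sections~\ref{subsec:removedMUPS}--\ref{subsec:addedMUPS}. Your correctness argument, timing bookkeeping, and handling of the node-splitting and edge cases all match the paper's treatment.
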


The results for NCA queries in Lemmas~\ref{lem:nca_general} and \ref{lem:nca_log_colors}
can be plugged into the functions $c_\nca$, $q_\nca$, and $s_\nca$.
In addition, even when a general case,
we can handle $\delta(n,\sigma)$ as a constant by utilizing a perfect hashing~\cite{Fredman1984hashing}
after $O(n\loglog n)$-time or $O(n)$-expected time preprocessing.
Table~\ref{tab:nca} lists different representations of the time/space complexities
of Theorems~\ref{thm:using_PTLCE} and \ref{thm:using_NCA}.
We emphasize that our algorithm runs in optimal $O(1+d)$ time when $\sigma$ is constant.
\begin{corollary}
  If $\sigma \in O(1)$,
  after $O(n)$-time and $O(n)$-preprocessing,
  we can compute the set of MUPSs after a single-character substitution in $O(1+d)$ time.
\end{corollary}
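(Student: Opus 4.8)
The plan is to obtain the corollary directly from Theorem~\ref{thm:using_NCA} by instantiating the NCA subroutine with Lemma~\ref{lem:nca_log_colors}. First I would observe that in the alternative algorithm for Problem~\ref{prob:lpp_centered_at_i} described in Section~\ref{subsec:using_NCA}, every node of the enhanced suffix tree is colored by a pair consisting of a center character and a one-bit MUPS flag, so the total number of distinct colors is at most $2\sigma$. When $\sigma \in O(1)$ this is $O(1) \subseteq O(\log n)$, and after the node-splitting trick described in the accompanying footnote the colored tree still has $O(n)$ nodes (the cumulative number of colored nodes over both trees of the eertree being $O(n)$). Hence Lemma~\ref{lem:nca_log_colors} applies and yields $c_{\nca}(n,\sigma),\, s_{\nca}(n,\sigma) \in O(n)$ and $q_{\nca}(n,\sigma) \in O(1)$.

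Next I would substitute these bounds into Theorem~\ref{thm:using_NCA}. The preprocessing becomes $O(n + c_{\nca}(n,\sigma)) = O(n)$ time and $O(n + s_{\nca}(n,\sigma)) = O(n)$ space. For the query time $O(q_{\nca}(n,\sigma) + \log\min\{\sigma,\log n\} + d)$, note that $q_{\nca}(n,\sigma) \in O(1)$ as above and $\log\min\{\sigma,\log n\} \le \log\sigma \in O(1)$ since $\sigma \in O(1)$; the remaining term is exactly $d$. Therefore the query runs in $O(1+d)$ time. One should also confirm that the auxiliary quantities $\delta(n,\sigma)$ and the $\log\sigma_i$-type terms that appear in the subroutines for the other MUPS types collapse to $O(1)$ for constant $\sigma$; this is immediate, because $\delta(n,\sigma) \in O(1)$ whenever $\sigma \in O(n)$ (a size-$\sigma$ array suffices) and $\log\sigma_i \le \log\sigma \in O(1)$.

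There is essentially no hard step here: the corollary is a bookkeeping specialization of Theorem~\ref{thm:using_NCA}. The only point requiring a moment's care is the first one, namely verifying that the NCA instance really uses only $O(\log n)$ colors so that the fast Lemma~\ref{lem:nca_log_colors} (rather than the $O(\log\log n)$-query Lemma~\ref{lem:nca_general}) is available; this follows from the structure of the coloring together with the linear bound on the total number of colored nodes noted in Section~\ref{subsec:using_NCA}.
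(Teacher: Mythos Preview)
Your proposal is correct and follows exactly the approach the paper itself indicates: the corollary is obtained by plugging Lemma~\ref{lem:nca_log_colors} into Theorem~\ref{thm:using_NCA}, as recorded in the last row of Table~\ref{tab:nca}. One small terminological slip: the colored tree on which NCA is performed is the (enhanced) suffix tree of $T\$$, not the eertree; the $O(n)$ bound on colored nodes comes from the fact that there are $O(n)$ distinct palindromic substrings of $T$, which is what the footnote records.
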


\begin{table}[t]
\centering
\begin{tabular}{l@{\hspace{1em}}l@{\hspace{1em}}l@{\hspace{1em}}l@{\hspace{1em}}l}
\toprule
            & \multicolumn{2}{c}{Time}\\
\cmidrule{2-3}
$\sigma$    & Construction    & Query                              & Space    & Ref.\\
\midrule
$n^{O(1)}$  & $O(n)$          & $O(\log\sigma + (\loglog n)^2+ d)$ & $O(n)$   & Theorem~\ref{thm:using_PTLCE}, \ref{thm:pathtreeLCE}\\
$n^{O(1)}$  & $O(n\loglog n)$ & $O(\loglog n + d)$                 & $O(n)$   & Theorem~\ref{thm:using_NCA}, Lemma~\ref{lem:nca_general}\\
$n^{O(1)}$  & expected $O(n)$ & $O(\loglog n + d)$                 & $O(n)$   & Theorem~\ref{thm:using_NCA}, Lemma~\ref{lem:nca_general}\\
$O(n)$      & $O(n)$          & $O((\loglog n)^2 + d)$             & $O(n)$   & Theorem~\ref{thm:using_PTLCE}, \ref{thm:pathtreeLCE}\\
$O(\log n)$ & $O(n)$          & $O(\loglog n + d)$                 & $O(n)$   & Theorem~\ref{thm:using_NCA}, Lemma~\ref{lem:nca_log_colors}\\
$O(1)$      & $O(n)$          & $O(1+d)$                             & $O(n)$   & Theorem~\ref{thm:using_NCA}, Lemma~\ref{lem:nca_log_colors}\\
\bottomrule
\end{tabular}
\caption{
  Concrete complexities of our algorithms for the problem of computing MUPSs after a single-character substitution.
  All the above results require only linear space.
  Each query time is $O(\log n)$ since $\log\sigma \in O(\log n)$ and $d \in O(\log n)$.
} \label{tab:nca}
\end{table}
 \section{Conclusions and Future Work} \label{sec:conclusions}

In this paper, we dealt with the problem of updating the set of MUPSs of a string after a single-character substitution.
We showed that the number $d$ of changes of MUPSs after a single-character substitution is $O(\log n)$.
Furthermore, we presented an algorithm that uses $O(n)$ time and space for preprocessing,
and updates the set of MUPSs in $O(\log\sigma + (\loglog n)^2 + d)$ time where $\sigma$ is the alphabet size.
We also proposed a variant of the algorithm, which runs in optimal $O(1+d)$ time when the alphabet size is constant.

Our future work includes the following:
\begin{itemize}
  \item[(1)] Can our algorithm be adapted to the cases of insertions and deletions?
  \item[(2)] Can we extend our algorithm to a fully dynamic setting?
  It is interesting whether the techniques in~\cite{AmirCPR19,DBLP:journals/corr/abs-1906-09732} can be utilized in the dynamic version of the MUPS problem. 
\end{itemize}
 \section*{Acknowledgements}
We would like to thank Associate Professor Shunsuke Inenaga (Kyushu University)
for the valuable discussions on simplifying our algorithms.
This work was supported by the JSPS KAKENHI Grant Numbers JP20J21147 (MF) and JP20J11983 (TM). \clearpage

\end{document}